\newtheorem{proposition}{\em Proposition}
\newtheorem{theorem}{\em Theorem}
\newtheorem{conjecture}{\em Conjecture}
\newtheorem{definition}{\em Definition}
\newtheorem{lemma}{\em Lemma}
\newtheorem{remark}{\em Remark}
\journal{Sample Journal}
\begin{document}

\begin{frontmatter}

\title{Normal $6$-edge-colorings of some bridgeless cubic graphs}

\address[label1]{Dipartimento di Informatica,
Universita degli Studi di Verona, Strada le Grazie 15, 37134 Verona, Italy}

\cortext[cor1]{Corresponding author}

\author[label1]{Giuseppe Mazzuoccolo\corref{cor1}}

\ead{giuseppe.mazzuoccolo@univr.it}

\author[label1,label5]{Vahan Mkrtchyan}
\address[label5]{Gran Sasso Science Institute,
School of Advanced Studies, L'Aquila, Italy}
\ead{vahan.mkrtchyan@gssi.it}

\begin{abstract}
In an edge-coloring (proper) of a cubic graph, an edge is poor or rich, if the set of colors assigned to the edge and the four edges adjacent it, has exactly three or exactly five distinct colors, respectively. An edge is normal in an edge-coloring if it is rich or poor in this coloring. A normal $k$-edge-coloring of a cubic graph is an edge-coloring with $k$ colors such that each edge of the graph is normal. We denote by $\chi'_{N}(G)$ the smallest $k$, for which $G$ admits a normal $k$-edge-coloring. Normal edge-colorings were introduced by Jaeger in order to study his well-known Petersen Coloring Conjecture. It is known that proving $\chi'_{N}(G)\leq 5$ for every bridgeless cubic graph is equivalent to proving Petersen Coloring Conjecture. Moreover, Jaeger was able to show that it implies classical conjectures like Cycle Double Cover Conjecture and Berge-Fulkerson Conjecture. Recently, two of the authors were able to show that any simple cubic graph admits a normal $7$-edge-coloring, and this result is best possible. In the present paper, we show that any claw-free bridgeless cubic graph, permutation snark, tree-like snark admits a normal $6$-edge-coloring. Finally, we show that any bridgeless cubic graph $G$ admits a $6$-edge-coloring such that at least $\frac{7}{9}\cdot |E|$ edges of $G$ are normal.
\end{abstract}

\begin{keyword}
Cubic graph \sep Petersen coloring conjecture \sep normal edge-coloring \sep class of snarks
\end{keyword}

\end{frontmatter}


\section{Introduction}
\label{sec:intro}

The Petersen Coloring Conjecture in graph theory asserts that the edge-set of every bridgeless cubic graph $G$ can be colored by using as set of colors the edge-set of the Petersen graph $P_{10}$ in such a way that adjacent edges of $G$ receive as colors adjacent edges of $P_{10}$.
The conjecture is well-known and it is largely considered hard to prove since it implies classical conjectures in the field such as Cycle Double Cover Conjecture and Berge-Fulkerson Conjecture (see \cite{Fulkerson,Jaeger1985,Zhang1997}).
In \cite{Jaeger1985}, Jaeger introduced an equivalent formulation of the Petersen Coloring Conjecture. More precisely, he showed that a bridgeless cubic graph is a counterexample to this conjecture, if and only if, it does not admit a normal edge-coloring (see Definitions \ref{def:poorrich} and \ref{def:normal} in Section \ref{sec:intro}) with at most $5$ colors. Let $\chi'_{N}(G)$ denote the normal chromatic index of $G$, that is, the minimum number of colors in a normal edge-coloring of $G$. In this terms, Petersen Coloring Conjecture is equivalent to saying that every bridgeless cubic graph has normal chromatic index at most 5. As far as we know, the best known upper bound for an arbitrary bridgeless cubic graph is $7$ (see \cite{Bilkova12,Normal7flows}). 
 There exist examples of simple cubic graphs (not bridgeless) with normal chromatic index $7$. On the other hand, in \cite{Normal7flows} it is shown that any simple cubic graph admits a normal $7$-edge-coloring. Let us recall that a weaker upper bound for an arbitrary simple cubic graph was proved in \cite{Bilkova12}. One may wonder whether the upper bound $7$ can be improved in other interesting subclasses of cubic graphs. Due to Conjecture \ref{conj:5NormalConj}, bridgeless cubic graphs form the first important case that one needs to study. Since obtaining an upper bound five for $\chi'_{N}(G)$ in this class is hard (Conjecture \ref{conj:5NormalConj}), one can try to show a weaker upper bound for $\chi'_{N}(G)$, that is, six (Conjecture \ref{conj:6normalBridgelessCubic}). Unfortunately, we are unable to prove this statement in general. This is the main reason why in this paper we consider some subclasses of bridgeless cubic graphs where we verify the statement, hence obtain partial results towards it. In subsection \ref{subsec:ClawFree} we verify Conjecture \ref{conj:6normalBridgelessCubic} in the class of claw-free bridgeless cubic graphs. Then, in subsections \ref{subsec:PermSnarks} and \ref{subsec:TreeLikeSnarks} we verify the conjecture for permutation snarks and treelike snarks, respectively. Finally, in subsection \ref{subsec:NormalEdges} we give a non-trivial lower bound for the number of normal edges in a 6-edge-coloring of a bridgeless cubic graph. 

Now, let us introduce the main definitions and notions used in the paper in detail. Graphs considered in this paper are finite and undirected. They do not contain loops, though they may contain parallel edges. A graph is simple if it contains no parallel edge.



For a graph $G$, let $V(G)$ and $E(G)$ be the set of vertices and edges of $G$, respectively. Moreover, let $\partial_{G}(v)$ be the set of edges of $G$ that are incident to the vertex $v$ of $G$. A subgraph $H$ of $G$ is even, if every vertex of $H$ has even degree in $H$. A matching of $G$ is a set of edges of $G$ such that any two of them do not share a vertex. A matching of $G$ is perfect, if it contains $\frac{|V(G)|}{2}$ edges. For a positive integer $k$, a $k$-factor of $G$ is a spanning $k$-regular subgraph of $G$. Observe that if $G$ is a cubic graph, then $F$ is a $1$-factor of $G$, if and only if the set $E(G)\setminus E(F)$ is an edge-set of a $2$-factor of $G$. These $1$-factor and $2$-factor are said to be complementary.

Let $G$ and $H$ be two cubic graphs. If there is a mapping $\phi:E(G)\rightarrow E(H)$, such that for each $v\in V(G)$ there is $w\in V(H)$ such that $\phi(\partial_{G}(v)) = \partial_{H}(w)$, then $\phi$ is called an $H$-coloring of $G$. If $G$ admits an $H$-coloring, then we will write $H
\prec G$. It can be easily seen that if $H\prec G$ and $K\prec H$, then $K\prec G$. In other words, $\prec$ is a transitive relation defined on the set of cubic graphs.


    \begin{figure}[ht]
	\begin{center}
	\begin{tikzpicture}[style=thick]
\draw (18:2cm) -- (90:2cm) -- (162:2cm) -- (234:2cm) --
(306:2cm) -- cycle;
\draw (18:1cm) -- (162:1cm) -- (306:1cm) -- (90:1cm) --
(234:1cm) -- cycle;
\foreach \x in {18,90,162,234,306}{
\draw (\x:1cm) -- (\x:2cm);
\draw[fill=black] (\x:2cm) circle (2pt);
\draw[fill=black] (\x:1cm) circle (2pt);
}
\end{tikzpicture}
	\end{center}
	\caption{The graph $P_{10}$.}\label{fig:Petersen10}
\end{figure}

Let $P_{10}$ be the well-known Petersen graph (Figure \ref{fig:Petersen10}). The Petersen coloring conjecture of Jaeger states:
\begin{conjecture}\label{conj:P10conj} (Jaeger, 1988 \cite{Jaeger1988}) For any bridgeless cubic
graph $G$, we have $P_{10} \prec G$.
\end{conjecture}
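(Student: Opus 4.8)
The plan is to attack Conjecture~\ref{conj:P10conj} through its equivalent coloring formulation recalled above. By Jaeger's equivalence, $P_{10}\prec G$ holds for a bridgeless cubic graph $G$ if and only if $G$ admits a normal $5$-edge-coloring, i.e. $\chi'_N(G)\le 5$. So I would aim to prove $\chi'_N(G)\le 5$ for every bridgeless cubic graph $G$, converting a statement about homomorphisms into $P_{10}$ into a purely local coloring task that is far easier to manipulate on small configurations.

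First I would dispose of the $3$-edge-colorable (Class~I) case. If $\varphi$ is a proper $3$-edge-coloring of $G$ with colors $\{1,2,3\}$, then for any edge $e=uv$ the two edges other than $e$ at $u$ carry the two colors different from $\varphi(e)$, and likewise at $v$; hence the palette on $e$ together with its four neighbours is exactly $\{1,2,3\}$ and $e$ is poor. Thus every proper $3$-edge-coloring is already a normal $3$-edge-coloring, and Class~I graphs satisfy the bound trivially. Using transitivity of $\prec$ together with the standard reductions at small edge cuts (separating the graph along a $2$- or $3$-edge cut and recombining colorings of the smaller pieces), I would reduce the whole problem to cyclically $4$-edge-connected Class~II graphs of girth at least $5$, that is, to snarks; this is the only genuinely obstinate family.

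The core step is therefore to construct a normal $5$-edge-coloring of an arbitrary snark. Here I would argue by minimal counterexample: take a snark $G$ minimizing $|V(G)|$ among those with $\chi'_N(G)>5$, compile a list of reducible configurations (short cycles, specific local subgraphs) that cannot occur because a normal $5$-edge-coloring of a smaller graph would extend across them, and then seek a contradiction via a discharging argument showing that no snark can avoid all reducible configurations simultaneously. The partial results of this paper — claw-free bridgeless cubic graphs, permutation snarks and tree-like snarks, each handled with \emph{six} colors — can be read as verifications of exactly this extension philosophy on structured families, and would serve as building blocks and as sanity checks for the reducibility lemmas.

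The hard part will be precisely this last step, and I do not expect to carry it through: Conjecture~\ref{conj:P10conj} is a long-standing open problem that implies the Cycle Double Cover and Berge--Fulkerson conjectures, so any complete reduction of general snarks to a finite unavoidable set would itself be a major breakthrough. Indeed, even the weaker target of a normal $6$-edge-coloring for every bridgeless cubic graph is currently out of reach in full generality, which is exactly why the present paper settles for establishing that weaker bound on the subclasses above rather than attacking the full conjecture head-on.
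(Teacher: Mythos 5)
The statement you were asked about is Conjecture~\ref{conj:P10conj}, the Petersen Coloring Conjecture itself; the paper does not prove it and explicitly treats it as a long-standing open problem, so there is no ``paper's own proof'' to compare against. Your proposal is, by your own admission, not a proof either: it is a strategy sketch whose decisive step is left entirely unexecuted. The parts you do argue are correct and standard --- the equivalence with $\chi'_N(G)\le 5$ via Proposition~\ref{prop:JaegerNormalColor}, the observation that every proper $3$-edge-coloring makes all edges poor so Class~I graphs are immediate, and the reduction along small edge cuts to cyclically $4$-edge-connected snarks --- and these all match remarks already made in the paper's introduction.

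The genuine gap is the entire core of the argument: the claim that one can compile a finite set of reducible configurations together with a discharging argument showing every snark contains one. Nothing in the proposal identifies even a single reducible configuration for normal $5$-edge-colorings, nor a discharging rule, and there is no reason to believe snarks admit such an unavoidable set --- unlike planar graphs, snarks have no Euler-formula-type constraint to drive a discharging argument, and cyclically $4$-edge-connected snarks of arbitrarily large girth exist, which defeats any reducibility list built from short cycles. The paper's actual contributions (Theorems~\ref{thm:6clawfreebridgeless}, \ref{thm:PermGraphs}, \ref{thm:TreeLike}, \ref{thm:59Bound}) all concern the weaker bound of six colors on restricted classes, and use ad hoc constructions (triangle expansions, $\mathbb{Z}_2^3$-flows with a suppressed value, explicit block colorings) rather than reducibility plus discharging; they cannot serve as ``sanity checks'' for lemmas that have not been formulated. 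In short, what you have written is a correct restatement of why the conjecture is hard, not a proof of it.
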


Note that the Petersen graph is the only bridgeless cubic graph that can color all bridgeless cubic graphs \cite{Mkrt2013}. The conjecture is clearly difficult to prove, since it implies the classical Berge-Fulkerson conjecture \cite{Fulkerson,Seymour} and (5,2)-cycle-cover conjecture \cite{Celmins1984,Preiss1981}.

%

A $k$-edge-coloring of a graph $G$ is an assignment of colors $\{1,...,k\}$ to edges of $G$, such that adjacent edges receive different colors. If $c$ is an edge-coloring of $G$, then for a vertex $v$ of $G$, let $S_{c}(v)$ be the set of colors that the edges incident to $v$ receive. 

\begin{definition}\label{def:poorrich}
Let $uv$ be an edge of a cubic graph $G$ and $c$ be an edge-coloring of $G$. The edge $uv$ is {\bf poor} if $|S_{c}(u)\cup S_{c}(v)|=3$ and {\bf rich} if $|S_{c}(u)\cup S_{c}(v)|=5$. An edge is normal with respect to $c$ if it is poor or rich.
\end{definition}

Edge-colorings having only poor edges are trivially $3$-edge-colorings of $G$. Also edge-colorings having only rich edges have been considered before, and they are called strong edge-colorings \cite{Andersen1992}. In this paper, we will focus on the case when all edges must be normal.

\begin{definition}\label{def:normal}
An edge-coloring $c$ of a cubic graph is {\bf normal}, if any edge is normal with respect to $c$. 
\end{definition} 

It is straightforward that an edge coloring which assigns a different color to every edge of a simple cubic graph is normal since all edges are rich. Hence, we can define the normal chromatic index of a simple cubic graph $G$, denoted by $\chi'_{N}(G)$, as the smallest $k$, for which $G$ admits a normal $k$-edge-coloring. In \cite{Jaeger1985}, Jaeger has shown that:

\begin{proposition}\label{prop:JaegerNormalColor}(Jaeger, \cite{Jaeger1985}) If $G$ is a cubic graph, then $P_{10}\prec G$, if and only if $G$ admits a normal $5$-edge-coloring.
\end{proposition}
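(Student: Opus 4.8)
The plan is to prove both implications by exhibiting one concrete normal $5$-edge-coloring of $P_{10}$ and transporting it along the coloring map $\phi$ in each direction. The single fact that powers everything is an explicit normal $5$-edge-coloring $c_0$ of $P_{10}$ itself, which I would describe through the Kneser model: identify $V(P_{10})$ with the $2$-subsets of $\{1,\dots,5\}$, two of them adjacent exactly when disjoint, and color the edge joining $\{a,b\}$ and $\{c,d\}$ with the unique missing element of $\{1,\dots,5\}\setminus\{a,b,c,d\}$. A one-line check shows this is a proper coloring with $S_{c_0}(\{a,b\})=\{1,\dots,5\}\setminus\{a,b\}$ at every vertex, and that every edge is in fact \emph{rich}, so $c_0$ is normal.

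For the forward implication, assume $P_{10}\prec G$ via $\phi\colon E(G)\to E(P_{10})$ and set $c:=c_0\circ\phi$. First I would observe that since $\phi$ maps each star $\partial_G(v)$ onto a star $\partial_{P_{10}}(w_v)$ and both have three elements, $\phi$ is a bijection on stars; consequently, for an edge $e=uv$ one gets $\phi(e)\in\partial_{P_{10}}(w_u)\cap\partial_{P_{10}}(w_v)$ and the four edges adjacent to $e$ map bijectively onto the four edges adjacent to $\phi(e)$. Hence $S_c(u)=S_{c_0}(w_u)$ and $S_c(v)=S_{c_0}(w_v)$, so $|S_c(u)\cup S_c(v)|=|S_{c_0}(w_u)\cup S_{c_0}(w_v)|$. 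Since $c_0$ is normal, this common value is $3$ or $5$, so $e$ is normal; properness of $c$ follows because each $S_{c_0}(w_v)$ has three distinct colors. This also handles the degenerate case $w_u=w_v$, which simply forces $e$ to be poor.

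For the converse, assume $c$ is a normal $5$-edge-coloring of $G$ and build $\phi$ directly from the Kneser picture. For an edge $e=uv$ with $c(e)=x$, the three-element sets $A:=S_c(u)$ and $A':=S_c(v)$ both contain $x$; I would send $e$ to the $P_{10}$-edge $\{\,\overline{A},\,A\setminus\{x\}\,\}$ read off from the star at the vertex $\overline{A}:=\{1,\dots,5\}\setminus A$. The crucial point is well-definedness: the local rule at $v$ prescribes $\{\,\overline{A'},\,A'\setminus\{x\}\,\}$, and a short case analysis (using that poor means $A=A'$ while rich means $A\cap A'=\{x\}$ and $A\cup A'=\{1,\dots,5\}$) shows these two prescriptions name the same edge of $P_{10}$. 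By construction $\phi(\partial_G(v))=\partial_{P_{10}}(\overline{S_c(v)})$ for every $v$, so $\phi$ is a $P_{10}$-coloring.

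I expect the main obstacle to be the poor edges in the converse. Because the natural coloring $c_0$ of $P_{10}$ has no poor edge at all, a poor edge $e=uv$ forces $A=A'$ and hence maps \emph{both} endpoints of $e$ to the same vertex $\overline{A}$ of $P_{10}$. One must notice that this is permitted: the definition of $H$-coloring requires only that each star of $G$ be carried onto \emph{some} star of $H$, never that the induced vertex assignment be injective or even well defined across an edge. Once this is acknowledged the case analysis is routine, but it is the conceptual hinge that makes the equivalence work despite $P_{10}$ admitting only rich edges in $c_0$.
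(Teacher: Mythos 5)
Your proof is correct: the Kneser labeling gives exactly the $5$-edge-coloring of $P_{10}$ in which every vertex $\{a,b\}$ sees the color set $\{1,\dots,5\}\setminus\{a,b\}$, and both directions (pulling $c_0$ back along $\phi$, and reconstructing $\phi$ from the sets $S_c(v)$, with poor edges collapsing both endpoints to the same vertex of $P_{10}$) go through as you describe. The paper states this proposition without proof, citing Jaeger; your argument is essentially the standard proof of that cited result, so there is nothing to compare beyond noting the agreement.
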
 This implies that Conjecture \ref{conj:P10conj} can be stated as follows:

\begin{conjecture}\label{conj:5NormalConj} For any bridgeless cubic graph $G$, $\chi'_{N}(G)\leq 5$.
\end{conjecture}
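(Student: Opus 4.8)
The statement of Conjecture~\ref{conj:5NormalConj} is, by Proposition~\ref{prop:JaegerNormalColor}, equivalent to the Petersen Coloring Conjecture $P_{10}\prec G$, so any attack must either construct a normal $5$-edge-coloring directly or exhibit a $P_{10}$-coloring. The plan is to first dispose of the easy cases and then concentrate on the genuinely hard core. If $G$ is $3$-edge-colorable, then a proper $3$-edge-coloring already forces $|S_{c}(u)\cup S_{c}(v)|=3$ for every edge $uv$, since at each of $u$ and $v$ the three incident edges carry all three colors; hence every edge is poor, the coloring is normal, and $\chi'_{N}(G)\le 3$. Thus I would immediately reduce to the bridgeless cubic graphs that are not $3$-edge-colorable, i.e.\ to snarks.

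Next I would exploit the transitivity of $\prec$ together with connectivity reductions. Since a graph carrying a nontrivial small edge-cut splits into smaller cubic graphs, the natural strategy is to argue that $P_{10}\prec G$ whenever $P_{10}$ colors each of the pieces obtained by cutting along a cyclic $k$-edge-cut for small $k$, and then to glue the colorings across the cut by matching the finitely many possible boundary patterns using symmetries of $P_{10}$. Standard arguments of this type reduce the conjecture to cyclically $5$-edge-connected snarks of girth at least five. At that point one would want a structural description fine enough either to build the coloring explicitly or to set up an inductive discharging scheme on a list of unavoidable reducible configurations, in the spirit of the Four Color Theorem.

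The hard part, and the reason the conjecture remains open, is precisely this last step: no finite set of reducible configurations is known to be unavoidable in every cyclically $5$-edge-connected snark, and no connectivity-preserving reduction is known that shrinks an arbitrary such snark to a strictly smaller Petersen-colorable one. Arbitrarily large, highly connected snarks resist all current decomposition techniques, so the inductive engine the plan needs simply does not yet exist. This is exactly why the present paper does not establish Conjecture~\ref{conj:5NormalConj} (nor even its weaker companion Conjecture~\ref{conj:6normalBridgelessCubic}) in full, and instead verifies the six-color version on structured subfamilies — claw-free graphs, permutation snarks, and tree-like snarks — where the global structure is rigid enough to make an explicit construction feasible. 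A realistic ``proposal'' for the general statement is therefore not a complete plan but the identification of the missing ingredient: a reduction, or a complete list of unavoidable reducible configurations, for cyclically $5$-edge-connected snarks.
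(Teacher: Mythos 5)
The statement you were given is Conjecture~\ref{conj:5NormalConj}, which is the Petersen Coloring Conjecture in Jaeger's normal-coloring formulation; the paper does not prove it and explicitly treats it as open, offering only partial results toward the weaker Conjecture~\ref{conj:6normalBridgelessCubic}. Your proposal correctly recognizes this, and the pieces you do establish --- the equivalence via Proposition~\ref{prop:JaegerNormalColor} and the triviality for $3$-edge-colorable graphs (every edge poor in a proper $3$-edge-coloring) --- coincide exactly with the paper's own remarks, so there is nothing to fault and nothing further to compare.
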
 Observe that Conjecture \ref{conj:5NormalConj} is trivial for $3$-edge-colorable cubic graphs. This is true because in any $3$-edge-coloring $c$ of a cubic graph $G$ any edge $e$ is poor, hence $c$ is a normal edge-coloring of $G$. Thus non-$3$-edge-colorable cubic graphs are the main obstacle to prove Conjecture \ref{conj:5NormalConj}. Note that Conjecture \ref{conj:5NormalConj} is verified for some non-$3$-edge-colorable bridgeless cubic graphs in \cite{HaggSteff2013}. Finally, note that in \cite{SamalElecNotes} the percentage of edges of a bridgeless cubic graph, which can be made normal in a 5-edge-coloring, is investigated.

In this paper, we focus on the problem of finding better upper bound for $\chi'_{N}(G)$ in the class of bridgeless cubic graphs. Since all simple cubic graphs admit a normal $7$-edge-coloring \cite{Normal7flows}, and proving $\chi'_{N}(G)\leq 5$ is hard (Conjecture \ref{conj:5NormalConj}), we focus on obtaining an upper bound $6$ for some bridgeless cubic graphs (Conjecture \ref{conj:6normalBridgelessCubic}). Terms and concepts that we do not define can be found in standard books like \cite{West}.

\section{Some Auxiliary Results}
\label{sec:aux}

In this section, we present some results that will be used later. Let us recall some basic terminology of flow theory which will be one of the techniques used in order to prove our results.

Let $A$ be an Abelian group with respect to $+$, and let $0$ be the unit element of $A$. If $G$ is a graph, then we say that $G$ admits a nowhere-zero $A$-flow, if there is an orientation $D$ of edges of $G$ and a mapping $\phi:E(G)\rightarrow A\setminus \{0\}$, such that for any vertex $v$ of $G$
\begin{equation*}
    \sum_{e\in \partial^{+}(v)}\phi(e)=\sum_{e\in \partial^{-}(v)}\phi(e).
\end{equation*} Here $\partial^{+}(v)$ and $\partial^{-}(v)$ denote the set of edges of $G$ leaving and entering $v$, respectively. 

It can be shown that if a graph $G$ admits a nowhere-zero $A$-flow with respect to some orientation $D$, then it admits a nowhere-zero $A$-flow with respect to any orientation. Hence, we can speak of $G$ having a nowhere-zero $A$-flow without specifying the orientation.

In the following two classical theorems of Jaeger, $\mathbb{Z}_{2}$ denotes the cyclic group of order $2$, and $\times$ is the direct product of groups. In what follows, we will denote, as usual, the direct product $\mathbb{Z}_2 \times \mathbb{Z}_2$ ($\mathbb{Z}_2 \times \mathbb{Z}_2 \times \mathbb{Z}_2$) by $\mathbb{Z}^2_2$ ($\mathbb{Z}^3_2$) (i.e. elementary abelian groups of order $4$ and $8$).

\begin{theorem}\label{thm:Jaeger8flow} (Jaeger, \cite{Jaeger1975,Jaeger1979}) Any bridgeless graph admits a nowhere-zero $\mathbb{Z}^3_{2}$-flow.
\end{theorem}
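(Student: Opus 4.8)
The plan is to recast the existence of a nowhere-zero $\mathbb{Z}_2^3$-flow as a covering problem and then to build the cover inductively along an ear decomposition. First I would observe that, since every element of $\mathbb{Z}_2$ is its own inverse, the orientation is irrelevant and a $\mathbb{Z}_2^3$-flow is simply a map $\phi:E(G)\to\mathbb{Z}_2^3$ with $\sum_{e\in\partial_G(v)}\phi(e)=0$ at every vertex $v$. Writing $\phi=(\phi_1,\phi_2,\phi_3)$ coordinatewise, each support $C_k=\{e:\phi_k(e)=1\}$ is an even subgraph of $G$, and $\phi$ is nowhere-zero precisely when $C_1\cup C_2\cup C_3=E(G)$. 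Hence the theorem is equivalent to the statement that the edges of every bridgeless graph can be covered by three even subgraphs, and this is the form I would prove.

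Next I would reduce to the case of a $2$-connected $G$: components and blocks may be handled independently, because evenness is a local condition at each vertex and even subgraphs paste together over edge-disjoint pieces, so covers of the individual blocks combine into a cover of $G$; moreover bridgelessness guarantees that every nontrivial block is $2$-connected. A $2$-connected graph admits an open ear decomposition $G_0\subsetneq G_1\subsetneq\cdots\subsetneq G_t=G$, where $G_0$ is a cycle and each $G_i$ arises from $G_{i-1}$ by attaching an ear $P$, i.e. a path whose endpoints $u,v$ lie in $G_{i-1}$ and whose internal vertices are new. I would carry the invariant that $E(G_i)$ is covered by three even subgraphs $C_1,C_2,C_3$ of $G_i$. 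The base case is immediate, since the cycle $G_0$ is itself even: take $C_1=E(G_0)$ and $C_2=C_3=\emptyset$.

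For the inductive step, note that each internal vertex of the ear $P$ has degree $2$, so to keep it even inside a given even subgraph that subgraph must contain all of $P$ or none of it. I would therefore pick a nonempty $S\subseteq\{1,2,3\}$ and add $P$ to exactly the subgraphs $C_j$ with $j\in S$; to repair the parity this spoils at the endpoints $u,v$, I would also add a $u$--$v$ path $Q$ inside $G_{i-1}$, so that what is genuinely added to each $C_j$ with $j\in S$ is the even subgraph $P\,\triangle\,Q$. This keeps every $C_j$ even and covers all edges of $P$. The only thing that can go wrong is the coverage of old edges lying on $Q$: toggling them can uncover an edge $f\in Q$ exactly when its current cover-set $A_f=\{j:f\in C_j\}$ equals $S$. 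The crux of the argument, and the step I expect to be the main obstacle, is to choose $S$ (among the seven nonempty subsets of $\{1,2,3\}$) and the path $Q$ simultaneously so that $A_f\neq S$ for every edge $f$ of $Q$; equivalently, to find a $u$--$v$ path in $G_{i-1}$ avoiding the edges whose cover-set is exactly $S$, which one settles using the abundant freedom in the seven choices of $S$ together with the $2$-connectivity of $G_{i-1}$. It is precisely this slack, the group $\mathbb{Z}_2^3$ offering seven nonzero values rather than the three available in $\mathbb{Z}_2^2$, that makes three even subgraphs attainable for every bridgeless graph, whereas a cover by two even subgraphs (equivalently a nowhere-zero $4$-flow) already fails, for instance on $P_{10}$.
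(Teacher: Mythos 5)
The paper does not actually prove this statement---it is quoted from Jaeger with a citation---so the only question is whether your argument stands on its own, and it does not: there is a genuine gap at exactly the step you flag as ``the main obstacle.'' Your reformulation (a nowhere-zero $\mathbb{Z}_2^3$-flow is the same as a cover of $E(G)$ by three even subgraphs), the reduction to $2$-connected blocks, and the mechanics of toggling $P\,\triangle\,Q$ are all correct. But the claim that one can always choose a nonempty $S\subseteq\{1,2,3\}$ and a $u$--$v$ path $Q$ in $G_{i-1}$ avoiding every edge whose cover-set equals $S$ is not proved, and it is false for an arbitrary three-even-subgraph cover of a $2$-connected graph. Take $H$ to be the $8$-rung ladder with rails $x_0x_1\cdots x_7$ and $y_0y_1\cdots y_7$ and rungs $x_ky_k$, and set $u=x_0$, $v=x_7$. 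Let $S_0,\dots,S_6$ be the seven distinct nonempty subsets of $\{1,2,3\}$, and let $C_j$ consist of the rail edges $x_ix_{i+1},y_iy_{i+1}$ for all $i$ with $j\in S_i$, together with those rungs $x_ky_k$ for which $j$ lies in exactly one of $S_{k-1},S_k$ (convention $S_{-1}=S_7=\emptyset$). Each vertex then has degree $0$ or $2$ in each $C_j$, so the $C_j$ are even; every rail pair $\{x_ix_{i+1},y_iy_{i+1}\}$ has cover-set exactly $S_i$, and every rung has nonempty cover-set ($S_{k-1}\triangle S_k$, resp. $S_0$ or $S_6$ at the ends), so the three subgraphs cover $E(H)$. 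Since each rail pair is a $2$-edge-cut separating $u$ from $v$ and the seven pairs realize all seven nonempty subsets, every $u$--$v$ path (indeed every $\{u,v\}$-join) meets an edge of every one of the seven classes, and no admissible pair $(S,Q)$ exists. Hence the invariant ``the current $C_1,C_2,C_3$ cover $E(G_i)$'' is too weak to carry the induction; you would have to strengthen it, or show that your construction never produces such covers, and neither is clear.

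For reference, Jaeger's actual proof sidesteps this difficulty via tree packing: reduce to $3$-edge-connected graphs, note that doubling every edge yields a $6$-edge-connected graph which by the Nash-Williams--Tutte theorem contains three edge-disjoint spanning trees, hence $G$ has spanning trees $T_1,T_2,T_3$ with every edge of $G$ outside at least one $T_i$; a parity argument shows each cotree $E(G)\setminus E(T_i)$ extends to an even subgraph $C_i$ by adding a suitable subset of $E(T_i)$, and then $C_1\cup C_2\cup C_3=E(G)$. If you want to keep an ear-type induction, you need some substitute for this tree-packing input; the sevenfold ``slack'' in the choice of $S$ is not by itself enough.
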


\begin{theorem}\label{thm:Jaeger4flow} (Jaeger, \cite{Jaeger1975,Jaeger1979}) Any 4-edge-connected graph admits a nowhere-zero $\mathbb{Z}^2_{2}$-flow.
\end{theorem}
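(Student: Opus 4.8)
The plan is to reformulate the statement combinatorially and then feed it a tree-packing theorem. Since the flow values lie in $\mathbb{Z}_2^2$, orientation is irrelevant (here $-x=x$), so a nowhere-zero $\mathbb{Z}_2^2$-flow is nothing but a pair $(C_1,C_2)$ of even subgraphs of $G$ whose union is all of $E(G)$. Indeed, writing the value of an edge $e$ in coordinates as $(\mathbbm{1}[e\in C_1],\mathbbm{1}[e\in C_2])$, the two conservation equations at a vertex $v$ say exactly that $v$ has even degree in $C_1$ and in $C_2$, while the nowhere-zero requirement says precisely that every edge lies in $C_1$ or in $C_2$. So it suffices to produce two even subgraphs that cover every edge.

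The first step, and the place where $4$-edge-connectivity is used, is to extract two edge-disjoint spanning trees $T_1,T_2$ of $G$. By the Nash--Williams--Tutte tree-packing theorem this is possible as soon as every partition of $V(G)$ into $r$ parts is crossed by at least $2(r-1)$ edges. For a $4$-edge-connected $G$ this is immediate: each nonempty proper part sends out at least $4$ edges, so summing the cuts and dividing by two shows the number of crossing edges is at least $\tfrac12\cdot 4r=2r\ge 2(r-1)$. Hence the pair $T_1,T_2$ exists.

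Next I would turn the trees into even subgraphs using the standard fact that, relative to a fixed spanning tree $T$, the even subgraphs of $G$ are in bijection with the subsets of co-tree edges: for each co-tree edge $e$ let $C_T(e)$ denote its fundamental cycle (it contains $e$ and otherwise only tree edges), and note that the symmetric difference of a family of fundamental cycles has co-tree part exactly the chosen family. I would therefore let $C_1$ be the unique even subgraph whose co-$T_1$ edge set is all of $E(G)\setminus E(T_1)$, and $C_2$ the even subgraph whose co-$T_2$ edge set is all of $E(G)\setminus E(T_2)$. By construction $C_1$ contains every edge outside $T_1$ and $C_2$ contains every edge outside $T_2$.

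Finally I would verify the covering. If some edge $e$ lay in neither $C_1$ nor $C_2$, then $e\notin C_1$ would force $e\in E(T_1)$ and $e\notin C_2$ would force $e\in E(T_2)$, contradicting $E(T_1)\cap E(T_2)=\emptyset$. Thus $C_1\cup C_2=E(G)$, and the coordinate assignment above is the desired nowhere-zero $\mathbb{Z}_2^2$-flow. The only genuinely hard ingredient is the very first step, namely packing two edge-disjoint spanning trees; everything afterwards is the bookkeeping that converts those trees into the flow, so I expect the tree-packing theorem to be the real engine of the argument.
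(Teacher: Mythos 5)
The paper offers no proof of this statement: it is quoted as a classical theorem of Jaeger with a citation to \cite{Jaeger1975,Jaeger1979}, so there is no in-paper argument to compare yours against. On its own merits your proof is correct, and it is in fact the standard (essentially Jaeger's original) argument. The translation of a nowhere-zero $\mathbb{Z}_2^2$-flow into a pair of even subgraphs covering $E(G)$ is exactly right, since over $\mathbb{Z}_2$ orientation is irrelevant and conservation at $v$ in each coordinate is precisely evenness of the degree of $v$ in the corresponding subgraph. The Nash--Williams--Tutte computation is fine: for a partition into $r\ge 2$ parts each part sends out at least $4$ edges by $4$-edge-connectivity, so the crossing edges number at least $2r\ge 2(r-1)$ (and $r=1$ is vacuous), giving two edge-disjoint spanning trees $T_1,T_2$. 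The complementation step is also sound: since each fundamental cycle of $T_i$ contains exactly one co-tree edge, the even subgraph $C_i$ with co-tree part all of $E(G)\setminus E(T_i)$ exists and is unique, and an edge missed by both $C_1$ and $C_2$ would have to lie in $E(T_1)\cap E(T_2)=\emptyset$. The only cosmetic remark is that the graphs in this paper may have parallel edges, but every ingredient you use (tree packing, fundamental cycles, the cycle-space bijection) goes through verbatim for multigraphs, so nothing needs to change.
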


Finally, we will need the following well-known consequence of Edmonds' Theorem:

\begin{theorem}\label{thm:KaiserKral} Any bridgeless cubic graph $G$ admits a perfect matching $F$, such that $F$ intersects any 3-edge-cut of $G$ in a single edge.
\end{theorem}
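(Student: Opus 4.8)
The plan is to derive the statement from Edmonds' description of the perfect matching polytope. Recall that for a graph $G$, the polytope $\mathrm{PM}(G)\subseteq\mathbb{R}^{E(G)}$ spanned by the incidence vectors $\chi^{M}$ of the perfect matchings $M$ of $G$ consists of exactly those $\mathbf{x}\geq 0$ satisfying $\sum_{e\in\partial_{G}(v)}x_{e}=1$ for every vertex $v$, together with $\sum_{e\in\partial_{G}(S)}x_{e}\geq 1$ for every odd-sized $S\subseteq V(G)$, where $\partial_{G}(S)$ denotes the edge-cut separating $S$ from its complement. We may assume $G$ is connected, arguing on each component otherwise. The first step is to check that the uniform vector $\mathbf{x}^{\ast}$ with $x^{\ast}_{e}=\tfrac{1}{3}$ for all $e$ lies in $\mathrm{PM}(G)$. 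Nonnegativity is clear, and the degree constraints hold because $G$ is cubic, giving $3\cdot\tfrac13=1$ at each vertex. For an odd set $S$, a degree count yields $|\partial_{G}(S)|\equiv\sum_{v\in S}\deg(v)=3|S|\equiv|S|\pmod 2$, so $|\partial_{G}(S)|$ is odd; since $G$ is bridgeless every nontrivial cut has at least two edges, and an odd integer at least $2$ is at least $3$, whence $\sum_{e\in\partial_{G}(S)}x^{\ast}_{e}\geq 1$. Thus $\mathbf{x}^{\ast}\in\mathrm{PM}(G)$.

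The second step is an averaging argument. By Edmonds' theorem we may write $\mathbf{x}^{\ast}=\sum_{i}\lambda_{i}\chi^{M_{i}}$ as a convex combination of perfect matchings, with $\lambda_{i}>0$ and $\sum_{i}\lambda_{i}=1$. Let $C=\partial_{G}(S)$ be an arbitrary $3$-edge-cut; as above $|S|$ is odd. For any perfect matching $M$, counting coverage of $S$ gives $|S|=2\,|E(G[S])\cap M|+|M\cap C|$, so $|M\cap C|\equiv|S|\equiv 1\pmod 2$, and since $|C|=3$ this means $|M\cap C|\in\{1,3\}$; in particular $|M_{i}\cap C|\geq 1$ for every $i$. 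Evaluating the convex combination on $C$ gives $1=\sum_{e\in C}x^{\ast}_{e}=\sum_{i}\lambda_{i}\,|M_{i}\cap C|$, hence $\sum_{i}\lambda_{i}\bigl(|M_{i}\cap C|-1\bigr)=0$ with every summand nonnegative. This forces $|M_{i}\cap C|=1$ for every $i$. Since this holds simultaneously for all $3$-edge-cuts $C$, any single $M_{i}$ in the decomposition is a perfect matching meeting every $3$-edge-cut in exactly one edge, which is the required $F$.

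The only technical points are the two parity facts — that both the fractional vector and each integral matching meet an odd cut in an odd number of edges — and the promotion of ``cut size at least $2$'' to ``cut size at least $3$'' via oddness, which is precisely where bridgelessness is used. The main conceptual obstacle, and the reason the full strength of Edmonds' theorem is needed rather than mere existence of a perfect matching, is that we never construct $F$ directly: membership of $\mathbf{x}^{\ast}$ in $\mathrm{PM}(G)$ produces a convex decomposition whose support must consist entirely of ``good'' matchings, and the inequality $|M_{i}\cap C|\geq 1$ collapsing to equality is a pure averaging observation. Care is required only in verifying membership, since that step demands the inequality for \emph{all} odd cuts, not merely the $3$-edge-cuts one ultimately cares about; the uniform vector passes exactly because in a bridgeless cubic graph every odd cut has at least three edges.
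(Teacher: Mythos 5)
Your proof is correct. The paper gives no proof of this statement---it records it only as a ``well-known consequence of Edmonds' Theorem''---and your argument (showing the all-$\tfrac{1}{3}$ vector lies in the perfect matching polytope because every odd cut of a bridgeless cubic graph has odd size at least $3$, then forcing $|M_i\cap C|=1$ for every matching in the support of the convex decomposition by averaging) is precisely the standard derivation the paper is alluding to.
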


\section{The main results}
\label{sec:Main}


In this section, we present our main results. They deal with the following question that was also asked by Robert \v{S}\'{a}mal:

\begin{conjecture}
\label{conj:6normalBridgelessCubic} Let $G$ be a bridgeless cubic graph. Then $\chi'_{N}(G)\leq 6$.
\end{conjecture}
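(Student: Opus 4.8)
The plan is to argue by induction on $|V(G)|$, reducing a general bridgeless cubic graph to smaller instances along small edge-cuts and isolating the genuinely hard case, namely cyclically $4$-edge-connected snarks. First observe that the $3$-edge-colorable case is free: as noted after Conjecture~\ref{conj:5NormalConj}, in any proper $3$-edge-coloring every edge is poor, so $\chi'_{N}(G)\le 3\le 6$. Hence we may assume that $G$ is not $3$-edge-colorable, that is, a snark. If $G$ has a $2$- or a nontrivial $3$-edge-cut, we will peel it off and recurse; otherwise $G$ is cyclically $4$-edge-connected and we attack it directly with flows.

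For the reduction I would first dispose of $2$-edge-cuts. If $\{e_1,e_2\}$ is a $2$-edge-cut, split $G$ into its two sides and close each side off by joining its two loose ends with a new edge, obtaining smaller bridgeless cubic graphs $G_1,G_2$; normal $6$-edge-colorings of $G_1$ and $G_2$, after permuting the six colors on one side so that they agree on the new edges, glue back to a coloring of $G$, and one checks locally that the two cut edges stay poor or rich. For a nontrivial $3$-edge-cut $\{e_1,e_2,e_3\}$ I would cap each side with a new trivalent vertex joined to its three loose ends, and then invoke Theorem~\ref{thm:KaiserKral}: choosing a perfect matching $F$ meeting every $3$-edge-cut in exactly one edge pins down, consistently on both sides, which cut edge is a matching edge, and this compatibility is exactly what is needed to paste the recursively obtained colorings into a normal $6$-edge-coloring of $G$. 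The work here is purely local, namely verifying that the $O(1)$ edges incident to the cut can always be made normal after a suitable relabelling of the six colors.

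This leaves the base case of cyclically $4$-edge-connected snarks, where I would exploit flows. By Theorem~\ref{thm:Jaeger8flow}, $G$ carries a nowhere-zero $\mathbb{Z}_2^3$-flow $\phi$; since $G$ is cubic, at each vertex the three values of $\phi$ are distinct nonzero elements summing to $0$, hence form a line of the Fano plane $\mathrm{PG}(2,2)$, and a short case analysis (the two "outer" pairs around any edge are either equal or disjoint) shows that every edge is then poor or rich. Thus $\phi$ is already a normal edge-coloring using the $7$ nonzero elements of $\mathbb{Z}_2^3$, recovering $\chi'_{N}(G)\le 7$. To save one color I would try to choose $\phi$ so that it \emph{avoids} a fixed nonzero value $p$: if no edge is colored $p$, then no vertex-line passes through $p$, only the four lines of $\mathrm{PG}(2,2)$ missing $p$ occur, and $\phi$ becomes a normal $6$-edge-coloring. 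So the crux reduces to the existence of a nowhere-zero $\mathbb{Z}_2^3$-flow whose support avoids one prescribed element; when this cannot be arranged I would instead start from the $7$-coloring and attempt to empty one color class by locally rerouting its edges through the remaining six values, repairing normality along the affected edges.

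The step I expect to be the genuine obstacle is precisely this base case. Unlike the $4$-edge-connected situation governed by Theorem~\ref{thm:Jaeger4flow} (which forces $3$-edge-colorability and hence normality with three colors), there is no known uniform reason why a cyclically $4$-edge-connected snark should admit a value-avoiding $\mathbb{Z}_2^3$-flow, and the local rerouting can cascade globally without terminating. This is exactly the difficulty that keeps Conjecture~\ref{conj:6normalBridgelessCubic} open in full generality, and it explains why one settles it only for structurally restricted families: claw-freeness, or the two-cycle layout of a permutation snark, or the treelike assembly of building blocks, each supplies the control over the odd cycles of a $2$-factor and over the matching edges that the general flow argument lacks.
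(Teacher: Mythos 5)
There is a genuine gap, and it is one you yourself flag: your ``proof'' is a reduction plan whose base case is left unproved, so it does not establish the statement. The statement is in fact Conjecture~\ref{conj:6normalBridgelessCubic}, which the paper explicitly does \emph{not} prove (``Since we are unable to prove this conjecture, we verify it in some subclasses''); the paper's actual contributions are Theorems~\ref{thm:6clawfreebridgeless}, \ref{thm:PermGraphs}, \ref{thm:TreeLike} for special classes and the quantitative Theorem~\ref{thm:59Bound}. Your reductions along $2$-edge-cuts and nontrivial $3$-edge-cuts are plausible (and the $2$-edge-cut gluing can indeed be made to work with six colors by a careful relabelling, as in the proof of Theorem~\ref{thm:59Bound}), but they only push the problem to cyclically $4$-edge-connected snarks, where you offer no argument that terminates.

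Moreover, the specific device you propose for the base case --- choosing a nowhere-zero $\mathbb{Z}_2^3$-flow that avoids one prescribed nonzero value --- is not merely ``not known to work'': the paper's Theorem~\ref{thm:MissingValueNZ8flow} shows it is \emph{impossible} for the graphs you need it for. If a bridgeless cubic graph admits a nowhere-zero $\mathbb{Z}_2^3$-flow missing a nonzero value $\gamma$, then the six remaining values split into three pairs each summing to $\gamma$, each pair's preimage is a matching, and $G$ is $3$-edge-colorable. Since your base case consists precisely of snarks, no value-avoiding flow exists there, and your fallback (``empty one color class by locally rerouting and repairing normality'') is not an argument but a restatement of the difficulty. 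What the paper does instead, in its partial results, is to modify a full nowhere-zero $\mathbb{Z}_2^3$-flow by merging two color classes and then control the damage: either exactly (Theorem~\ref{thm:PermGraphs}, where Lemma~\ref{lem:permsnarks} supplies two matching edges whose recoloring creates no non-normal edge) or approximately (Theorem~\ref{thm:59Bound}, where the number of spoiled edges is bounded by $\frac{2}{9}|E(G)|$). Conjecture~\ref{conj:NonConflictingFlow6} in Section~\ref{sec:FutureWork} is exactly the strengthening of the flow approach that would make your base case go through, and it is also open.
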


Since we are  unable to prove this conjecture, we verify it in some subclasses of bridgeless cubic graphs, hence obtain partial results towards it.


\subsection{Claw-free cubic graphs}
\label{subsec:ClawFree}

We will need some results on claw-free simple cubic graphs. Recall that a graph $G$ is claw-free, if it does not contain four vertices, such that the subgraph of $G$ induced on these vertices is isomorphic to $K_{1,3}$. It turns out that this class is interesting in this context since the restriction of Conjecture \ref{conj:5NormalConj} for claw-free cubic graphs implies its truth in general. 
In order to see this, let $G$ be any bridgeless cubic graph. Consider a bridgeless cubic graph $G_{\Delta}$ obtained from $G$ by replacing any vertex of $G$ with a triangle. Observe that $G_{\Delta}$ is claw-free. Now let $c$ be a normal 5-edge-coloring of $G_{\Delta}$. Take any triangle $T$ in $G_{\Delta}$. If we assume that the edges of $T$ are colored with colors 1, 2 and 3 in $c$, then one of three edges adjacent to a vertex $T$ and lying outside $T$ must be colored with 1, 2 or 3. Now it is not hard to see that this implies that all six edges incident to vertices of $T$ are colored with 1, 2 or 3. If we contract all the triangles of $G_{\Delta}$ that correspond to vertices of $G$ and consider the restriction of $c$ to $G$, then clearly it will be a normal 5-edge-coloring of $G$.

In this section, we show that $\chi'_{N}(G)\leq 6$ for claw-free bridgeless cubic graphs. In \cite{ChudSeyClawFreeChar}, arbitrary claw-free graphs are characterized. In \cite{sang-il_oum:2011}, Oum has characterized simple, claw-free bridgeless cubic graphs. In order to formulate Oum's result, we need some definitions. In a claw-free simple cubic graph $G$ any vertex belongs to one, two, or three triangles. If a vertex $v$ belongs to three triangles of $G$, then the component of $G$ containing $v$ is isomorphic to $K_4$ (Figure \ref{fig:K4}). An induced subgraph of $G$ that is isomorphic to $K_4-e$ is called a diamond \cite{sang-il_oum:2011}. It can be easily checked that in a claw-free cubic graph no two diamonds intersect.  

\begin{figure}[!htbp]
\begin{center}
\begin{tikzpicture}[scale=0.35]

 \tikzstyle{every node}=[circle, draw, fill=black!,
                        inner sep=0pt, minimum width=4pt]
  
  \node[circle,fill=black,draw] at (-2,2) (n2) {};

   \node[circle,fill=black,draw] at (-2,-2) (n3) {};

    \node[circle,fill=black,draw] at (-6,2) (n4) {};
    
    \node[circle,fill=black,draw] at (-6,-2) (n5) {};

 \draw (n2)--(n3);
 
 \draw (n2)--(n4);
 \draw (n3)--(n4);
 \draw (n3)--(n5);
 \draw (n2)--(n5);
 
 \draw (n4)--(n5);

\end{tikzpicture}
\end{center}
\caption{The graph $K_4$.} \label{fig:K4}
\end{figure}

A string of diamonds of $G$ is a maximal sequence $F_{1},...,F_{k}$ of diamonds, in which $F_{i}$ has a vertex adjacent to a vertex of $F_{i+1}$, $1\leq i \leq k-1$.  A string of diamonds has exactly two vertices of degree two, which are called the head and the tail of the string. Replacing an edge $e = uv$ with a string of diamonds with the head $x$ and the tail $y$ is to remove $e$ and add edges $(u,x)$ and $(v,y)$.

If $G$ is a connected claw-free simple cubic graph such that each vertex lies in a diamond, then $G$ is called a ring of diamonds. It can be easily checked that each vertex of a ring of diamonds lies in exactly one diamond. As in \cite{sang-il_oum:2011}, we require that a ring of diamonds contains at least two diamonds.

\begin{proposition}\label{prop:OumClawfreebridgelessCharac} (Oum, \cite{sang-il_oum:2011}) $G$ is a connected claw-free simple bridgeless cubic graph, if and only if
\begin{enumerate}
    \item [(1)] $G$ is isomorphic to $K_4$, or
    
    \item [(2)] $G$ is a ring of diamonds, or
    
    \item [(3)] there is a connected bridgeless cubic graph $H$, such that $G$ can be obtained from $H$ by replacing some edges of $H$ with strings of diamonds, and by replacing any vertex of $H$ with a triangle.
\end{enumerate}
\end{proposition}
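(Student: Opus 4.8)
The statement is a biconditional, so the plan is to treat the two directions separately. The \emph{if} direction is a routine verification: one checks that each of the three families produces a graph that is simultaneously connected, simple, cubic, bridgeless, and claw-free. For $K_4$ this is immediate. For a ring of diamonds and for the graphs of type (3), claw-freeness is local — every vertex sits inside a triangle whose third side certifies that its neighbourhood is not independent — while bridgelessness follows from the fact that truncating a vertex into a triangle, replacing an edge by a string of diamonds, and closing diamonds into a ring all preserve $3$-edge-connectivity of the underlying bridgeless structure. I would dispatch this direction first and spend the bulk of the argument on the converse.

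For the \emph{only if} direction, let $G$ be a connected claw-free simple bridgeless cubic graph. The first step is the observation already isolated in the text: since $G$ is claw-free and cubic, the three neighbours of any vertex cannot be pairwise non-adjacent, so every vertex lies in at least one triangle, and a short degree count shows every vertex lies in exactly one, two, or three triangles. If some vertex lies in three triangles, its component — hence $G$, by connectedness — is $K_4$, giving case (1). I would therefore assume from here on that no vertex lies in three triangles.

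The second step is to pin down the local structure. A direct argument shows that two triangles meeting at a vertex must share an edge (a vertex of degree three cannot supply four edge-slots) and together induce a diamond, in which the two vertices of full degree are precisely the vertices lying in two triangles, while the two vertices incident to the missing edge lie in exactly one triangle. Using the given fact that distinct diamonds are vertex-disjoint, I would group the diamonds into their maximal strings, each string behaving like a subdivided edge with a head and a tail (its two degree-two vertices). If every vertex of $G$ lies in a diamond, then contracting each diamond to a point turns the external edges into a $2$-regular graph — each diamond contributes exactly two external edges — which by connectedness is a single cycle, so $G$ is a ring of diamonds, giving case (2).

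It remains to handle the case where some triangle is \emph{pure}, that is, not contained in any diamond. The plan is to form a quotient $H$ by contracting each maximal string of diamonds to a single edge and each pure triangle to a single vertex, and then to show that $H$ is a connected bridgeless cubic graph from which $G$ is recovered exactly by the operations in (3). The main obstacle lies here: one must verify that the quotient is well defined and genuinely cubic — that each pure triangle has three external edges leading to three distinct points of $H$, that every edge of $G$ lying in neither a pure triangle nor a diamond either joins two triangle-vertices directly or runs through one contracted string, and that heads and tails of strings attach to triangle-vertices rather than leaving dangling ends. I would also have to track the possibility of parallel edges appearing in $H$ (which is permitted, since $H$ need not be simple) and argue that contracting the strings and triangles cannot create a bridge, which follows because each gadget is internally $2$-edge-connected to the rest of $G$. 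Once $H$ is shown to be cubic, bridgeless and connected, reversing the contractions reproduces $G$ exactly as in (3), completing the classification.
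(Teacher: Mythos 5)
The paper does not prove this statement: it is Oum's structure theorem, imported verbatim from \cite{sang-il_oum:2011} and used as a black box (only the easy reverse direction of a related reduction is sketched in the surrounding text, and the facts that vertices lie in at most three triangles and that diamonds are pairwise disjoint are stated without proof). So there is no in-paper argument to compare yours against; I can only assess your proposal on its own terms. Your outline follows what is essentially the standard route: every vertex of a claw-free cubic graph lies in a triangle; three triangles through a vertex force $K_4$; two triangles through a vertex share an edge and form a diamond; if every vertex is in a diamond you are in case (2) (which, with the paper's definition of a ring of diamonds, is nearly immediate); otherwise contract maximal strings of diamonds to edges and the remaining ``pure'' triangles to vertices to obtain $H$. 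All of these local steps are correct and you justify the ones that need justifying.

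The gap is that the part you yourself label ``the main obstacle'' is left as a to-do list rather than carried out, and it is precisely there that the bridgeless hypothesis must actually be used. Concretely: you must rule out that the quotient $H$ has a loop, i.e.\ that a string of diamonds (or a single external edge) attaches both of its ends to the same pure triangle $T=\{x,y,z\}$. A direct $xy$ edge is excluded by simplicity, but a string from $x$ back to $y$ is not excluded by any local claw-freeness argument --- it is excluded because then the third external edge at $z$ would be a bridge of $G$. Without this step the quotient need not be a (loopless) cubic multigraph and case (3) fails. Similarly, your one-line claim that $H$ inherits bridgelessness ``because each gadget is internally $2$-edge-connected'' needs the observation that an $H$-bridge lifts to an actual cut edge of $G$ (the attaching edge of a string, or the edge itself), which is easy but is the substance of the verification, not a formality. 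As written, the proposal is a correct plan whose decisive verifications are announced rather than performed; filling in the loop exclusion and the bridge-lifting argument would complete it.
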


We will need some additional definitions. Let $T$ be a triangle in a cubic graph $G$ such that each edge of $T$ is of multiplicity one. If $e$ is an edge of $T$, then let $f$ be the edge of $G$ that is incident to a vertex of $T$ and is not adjacent to $e$. The edges $e$ and $f$ will be called opposite. We prove the following lemma:

\begin{lemma}\label{lem:HperfectMatching} Let $H$ be a cubic graph containing a perfect matching, and let $H_{\Delta}$ be obtained from $H$ by replacing every vertex of $H$ by a triangle. Then $\chi'_{N}(H_{\Delta})\leq 6$.
\end{lemma}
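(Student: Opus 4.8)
The plan is to first reduce normality to a purely local condition at each triangle, and then to build an explicit $6$-edge-coloring driven by the perfect matching. For the local analysis I would fix a proper edge-coloring of $H_{\Delta}$ and inspect one triangle $T_v$ with internal edges $e_1,e_2,e_3$ and external edges $f_1,f_2,f_3$ leaving it. A short case check shows that the three internal edges are \emph{simultaneously} poor or \emph{simultaneously} rich: all poor exactly when each $f_i$ repeats the color of the internal edge opposite to it (so the six edges at $T_v$ use only three colors), and all rich exactly when $f_1,f_2,f_3$ get three colors distinct from one another and from the three internal colors (so $T_v$ is rainbow). Moreover, for an edge $f=pp'$ joining two triangles, writing $P(p)$ for the pair of colors on the two internal edges at $p$, the edge $f$ is normal if and only if $P(p)=P(p')$ or $P(p)\cap P(p')=\emptyset$; an overlap in exactly one color is the only way to fail. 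This reduces the problem to assigning each triangle a three-color palette together with a consistent choice of the pairs $P(\cdot)$.

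Second, I would use the hypothesis directly. Let $M$ be a perfect matching of $H$ and $F=E(H)\setminus M$ the complementary $2$-factor. Color $M$ with color $4$ and properly $3$-edge-color $F$ with $\{1,2,3\}$ (alternating $1,2$ on each cycle, with a single edge recolored $3$ on each odd cycle), and let $w_v\in\{1,2,3\}$ be the color missing at $v$. Now make \emph{every} triangle of $H_{\Delta}$ rich, with internal palette $S_v=\{w_v,5,6\}$, keeping the external edges colored as in this $4$-edge-coloring of $H$. Concretely, assign the pair $\{5,6\}$ to the matching edge at $v$ and the pairs $\{w_v,5\}$ and $\{w_v,6\}$ to the two $F$-edges at $v$; the internal edge shared by two triangle-vertices then receives the color in the intersection of their pairs, forcing the three internal colors to be $w_v,5,6$. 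One checks immediately that each triangle is proper and rainbow (hence rich), and that every matching edge carries equal pairs $\{5,6\}=\{5,6\}$ and is therefore poor, so all matching edges are automatically normal.

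The only genuine constraints come from the $F$-edges, and this is where I expect the main difficulty to sit. For an edge $e=uu'$ of $F$ one needs the chosen pairs to be equal or disjoint, which translates into a rule for the labels in $\{5,6\}$ at the two ends: the labels must agree when $w_u=w_{u'}$ and disagree when $w_u\neq w_{u'}$, while at each vertex the two $F$-edges are forced to carry opposite labels. Propagating this rule around a cycle $C$ of $F$ produces a single parity obstruction, namely that the number of label flips forced around $C$ is even; a direct computation rewrites this as the condition that the number of indices $i$ with $c(e_{i-1})=c(e_{i+1})$ is even, where $c$ is the chosen $3$-coloring of $C$. I would then verify that the standard coloring above always meets this requirement: on an even cycle every such index qualifies (an even count), and on an odd cycle of length $2k+1$ exactly the $2k-2$ interior indices of the alternating run qualify (again even). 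Hence the $\{5,6\}$-labeling can be chosen consistently on every cycle of $F$, all pairs are compatible, and the result is a normal $6$-edge-coloring. The crux of the write-up is thus the labeling/parity bookkeeping around the cycles of $F$; everything else is routine local verification.
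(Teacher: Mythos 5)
Your proof is correct, but it takes a genuinely different route from the one in the paper. The paper keeps the perfect matching $F$ of $H$ monochromatic (color $1$), colors the opposite triangle edges with $4$ (or a forced color from $\{4,5,6\}$ in the odd case), and then writes down explicit periodic sequences --- $2,5,2,3,6,3$ for cycles of $\overline{F\cup F'}$ of length $6l$, and a one-off block $2,5,3,2,4,3,2,6,3$ followed by the same period for length $6l+3$ --- verifying normality directly on these two patterns. You instead first isolate a local criterion (the three internal edges of a triangle are all poor or all rich, and an external edge $pp'$ is normal iff the internal color-pairs $P(p),P(p')$ are equal or disjoint), then make every triangle rainbow with palette $\{w_v,5,6\}$ while keeping the matching color $4$ and a proper $3$-coloring of the $2$-factor on the external edges; the only remaining freedom is a $\{5,6\}$-labeling of edge-ends along each cycle of the $2$-factor, governed by a single parity obstruction, which you correctly reduce to the evenness of $\#\{i: c(e_{i-1})=c(e_{i+1})\}$ and verify for the standard alternating coloring on both even and odd cycles (counts $m$ and $2k-2$ respectively; note the computation also goes through for $2$-cycles arising from parallel edges of $H$, which the lemma must cover since $H$ need not be simple). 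I checked the pair criterion, the richness of every triangle, the poorness of the matching edges, and the parity bookkeeping, and all are sound. What each approach buys: the paper's construction is shorter to verify once the two periodic patterns are drawn, and needs no case analysis beyond the residue of the cycle length mod $6$; yours is more conceptual, makes transparent exactly where a failure could occur (a single-color overlap of pairs on a $2$-factor edge) and why it cannot, and the equal-or-disjoint pair criterion is a reusable tool of independent interest. The write-up should include the routine properness checks you flag, but no idea is missing.
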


\begin{proof} Let $F$ be a perfect matching of $H$, and let $\overline{F}$ be the $2$-factor of $H$ that is complementary to $F$. Since $H_{\Delta}$ is obtained from $H$ by replacing each vertex of $H$ with a triangle, with abuse of notation, we will always refer to the edges of $H$ as a subset of the edges of $H_{\Delta}$. Hence, we see $F$ as a matching (not perfect) of $H_{\Delta}$. We denote by $F'$ the matching of $H_{\Delta}$ consisting of all edges of the added triangles which are opposite to an edge of $F$. Note that $F \cup F'$ is a perfect matching of $H_{\Delta}$, and its complement $\overline{F \cup F'}$ is a $2$-factor of $H_{\Delta}$.

First, we color the edges of $H_{\Delta}$ in $F$ with color $1$. Now, let $C$ be a cycle of $\overline{F}$. The edges of $C$ belong to the edges of a unique cycle $C'$ of $\overline{F \cup F'}$. Moreover, by construction, the length of $C'$ is exactly three times the length of $C$. Hence, we have only two cases according to the parity of the cycle $C$: either the length of $C'$ is $6l$ or $6l+3$, for an arbitrary positive integer $l \geq 1$.\\

Case $6l$: Color all edges in $F'$ with two ends in $C$ with color $4$. Color edges of $C$ in the order by repeating $l$ times the sequence of colors $2,5,2,3,6,3$, in such a way that all edges of the added triangles of $H_{\Delta}$ receive colors $2$ and $3$.  On Figure \ref{fig:4CycleExample}, the coloring is presented when $l=2$.\\

\begin{figure}[!htbp]
\begin{center}
\begin{tikzpicture}[scale=0.65]


 \draw[fill=black] (0,0) circle [radius=0.15cm] ;

  \draw[fill=black] (2,0) circle [radius=0.15cm] ;

   \draw[fill=black] (2,2) circle [radius=0.15cm] ;

    \draw[fill=black] (0,2) circle [radius=0.15cm] ;
    
 \draw (0,0) -- (2,0);
 
 \draw (2,0) -- (2,2);
 
 \draw (2,2) -- (0,2);
 
 \draw (0,2) -- (0,0);
 
 \draw[dashed] (0,0) -- (-1,-1);
 \draw[dashed] (2,0) -- (3,-1);
 \draw[dashed] (2,2) -- (3,3);
 \draw[dashed] (0,2) -- (-1,3);
 
  \node at (1, 1) (vC) {$C$};
  


\draw (6,-2) circle [radius=1cm] ;
\coordinate (a1) at (5.23,-2.7);
\coordinate (a11) at (4.23,-3.7);
\coordinate (b1) at (6,-1);
\coordinate (c1) at (7,-2);

\draw[fill=black] (a1) circle [radius=0.15cm] ;

 \draw[fill=black] (b1) circle [radius=0.15cm] ;
 
 \draw[fill=black] (c1) circle [radius=0.15cm] ;
 
 \draw (a1)--(b1)--(c1)--(a1);
 \draw[dashed] (a1) -- (a11);
 
 
 \node at (6.3, -2.6)  {$3$};
 
  \node at (5.45, -1.5)  {$2$};

  \node at (4.65, -2.75)  {$1$};
  \node at (6.95, -1.05)  {$4$};
  
  
\coordinate (a2) at (14.75,-2.6);
\coordinate (a22) at (15.75,-3.6);
\coordinate (b2) at (13,-2);
\coordinate (c2) at (14,-1);

\draw (14,-2) circle [radius=1cm] ;

\draw[fill=black] (a2) circle [radius=0.15cm] ;

\draw[fill=black] (b2) circle [radius=0.15cm] ;

 \draw[fill=black] (c2) circle [radius=0.15cm] ;
 
  \draw (a2)--(b2)--(c2)--(a2);
 \draw[dashed] (a2) -- (a22);
 
 \node at (13.7, -2.6)  {$3$};
  \node at (14.6, -1.6)  {$2$};
  \node at (15.6, -2.85)  {$1$};
  \node at (13.15, -1.05)  {$4$};

   
 \coordinate (a3) at (14.65,6.75);
\coordinate (a33) at (15.65,7.75);
\coordinate (b3) at (14,5);
\coordinate (c3) at (13,6);

    \draw (14,6) circle [radius=1cm] ;
    
    \draw[fill=black] (a3) circle [radius=0.15cm] ;
    
 \draw[fill=black] (b3) circle [radius=0.15cm] ;
 
 \draw[fill=black] (c3) circle [radius=0.15cm] ;

  \draw (a3)--(b3)--(c3)--(a3);
 \draw[dashed] (a3) -- (a33);
 
 \node at (14.6, 5.75)  {$2$};
 \node at (13.9, 6.7)  {$3$};
  \node at (15.6, 6.85)  {$1$};
  \node at (13.15, 5.05)  {$4$};
  
  
 \coordinate (a4) at (5.33,6.66);
\coordinate (a44) at (4.33,7.66);
\coordinate (b4) at (7,6);
\coordinate (c4) at (6,5);

   \draw (6,6) circle [radius=1cm] ;
   
   \draw[fill=black] (a4) circle [radius=0.15cm] ;
   
   \draw[fill=black] (b4) circle [radius=0.15cm] ;
   
 \draw[fill=black] (c4) circle [radius=0.15cm] ;
 
 \draw (a4)--(b4)--(c4)--(a4);
 \draw[dashed] (a4) -- (a44);
 
 \node at (5.4, 5.6)  {$2$};
 \node at (6.3, 6.6)  {$3$};
  \draw[dashed] (a4) -- (a44);
  \node at (4.4, 6.85)  {$1$};
  \node at (6.95, 5.05)  {$4$};
  
  
  \draw (c1) -- (b2);
    \node at (10, -2.35)  {$6$};
  
  \draw (c2) -- (b3);
  \node at (14.35, 2)  {$5$};
  
  \draw (c3) -- (b4);
  \node at (10, 6.35)  {$6$};
  
  \draw (b1) -- (c4);
   \node at (5.65, 2)  {$5$};

  
  \node at (10, 2) (vC) {$G$};
  \node at (1, -1) (vH) {$H$};

\end{tikzpicture}
\end{center}
\caption{The coloring of $C$ in the case $6l$, when $l=2$. The edges of $F$ are dashed.} \label{fig:4CycleExample}
\end{figure}

Case $6l+3$: Consider nine consecutive edges of $C$ in such a way that the first and the last of them are edges of the added triangles of $H_{\Delta}$. Color them in the order with the sequence of colors $2,5,3,2,4,3,2,6,3$.
If $l>1$, color the remaining $6(l-1)$ edges of $C$ in the order by repeating $l-1$ times the sequence of colors $2,5,2,3,6,3$. 
Finally, color all edges in $F'$ with two ends in $C$ with the unique color in $\{4,5,6\}$ which gives a proper coloring of $H_{\Delta}$. It is easy to see that such a color always exists and it is uniquely determined. 
On Figure \ref{fig:5CycleExample}, the coloring is presented when $l=2$.\\

\begin{figure}[!htbp]
\begin{center}
\begin{tikzpicture}[scale=0.65]


 \draw[fill=black] (0,2) circle [radius=0.15cm] ;

  \draw[fill=black] (2,-2) circle [radius=0.15cm] ;

   \draw[fill=black] (-2,-2) circle [radius=0.15cm] ;
   
   \draw[fill=black] (-2,0) circle [radius=0.15cm] ;
   
   \draw[fill=black] (2,0) circle [radius=0.15cm] ;

 \draw (0,2) -- (-2,0);
 \draw (0,2) -- (2,0);
 
 \draw (-2,-2) -- (-2,0);
 \draw (2,-2) -- (2,0);
 
 \draw (-2,-2) -- (2,-2);

 \draw[dashed] (0,2) -- (0,3);
 \draw[dashed] (2,-2) -- (3,-3);
 \draw[dashed] (-2,-2) -- (-3,-3);
 \draw[dashed] (-2,0) -- (-3,1);
 \draw[dashed] (2,0) -- (3,1);

  \node at (0, -0.25) (vC) {$C$};
  


\draw (6,-2) circle [radius=1cm] ;
\coordinate (a1) at (5.23,-2.7);
\coordinate (a11) at (4.23,-3.7);
\coordinate (b1) at (6,-1);
\coordinate (c1) at (7,-2);

\draw[fill=black] (a1) circle [radius=0.15cm] ;

 \draw[fill=black] (b1) circle [radius=0.15cm] ;
 
 \draw[fill=black] (c1) circle [radius=0.15cm] ;
 
 \draw (a1)--(b1)--(c1)--(a1);
 \draw[dashed] (a1)--(a11);
 
 \node at (6.3, -2.6)  {$2$};
  \node at (5.35, -1.6)  {$3$};
  
  \node at (4.45, -3.05)  {$1$};
  \node at (6.95, -1.05)  {$6$};
  
  
  \draw (14,-2) circle [radius=1cm] ;
  
\coordinate (a2) at (14.75,-2.6);
\coordinate (a22) at (15.75,-3.6);
\coordinate (b2) at (13,-2);
\coordinate (c2) at (14,-1);

\draw[fill=black] (a2) circle [radius=0.15cm] ;

 \draw[fill=black] (b2) circle [radius=0.15cm] ;
 
 \draw[fill=black] (c2) circle [radius=0.15cm] ;
 
  \draw (a2)--(b2)--(c2)--(a2);
 \draw[dashed] (a2)--(a22);
 
 \node at (13.8, -2.6)  {$3$};
  \node at (14.6, -1.6)  {$2$};
 
  \node at (15.6, -2.85)  {$1$};
  \node at (13.15, -1.05)  {$4$};
  

\draw (6,2) circle [radius=1cm] ;
\coordinate (a4) at (5.23,2.7);
\coordinate (a44) at (4.23,3.7);
\coordinate (b4) at (6,1);
\coordinate (c4) at (6.6,2.7);

\draw[fill=black] (a4) circle [radius=0.15cm] ;

 \draw[fill=black] (b4) circle [radius=0.15cm] ;
 
 \draw[fill=black] (c4) circle [radius=0.15cm] ;
 
 \draw (a4)--(b4)--(c4)--(a4);
 \draw[dashed] (a4)--(a44);
 
 \node at (6.6, 1.8)  {$5$};
  \node at (5.4, 1.8)  {$2$};
  
  \node at (4.2, 3.2)  {$1$};
  \node at (6, 3.35)  {$3$};
 

\draw (14,2) circle [radius=1cm] ;

\coordinate (a5) at (14.6,2.7);
\coordinate (a55) at (15.23,3.7);
\coordinate (b5) at (13.23,2.7);
\coordinate (c5) at (14,1);

\draw[fill=black] (a5) circle [radius=0.15cm] ;

 \draw[fill=black] (b5) circle [radius=0.15cm] ;
 
 \draw[fill=black] (c5) circle [radius=0.15cm] ;
 
 \draw (a5)--(b5)--(c5)--(a5);
 \draw[dashed] (a5)--(a55);
 
\node at (14, 3.35)  {$3$};
 \node at (14.5, 1.8)  {$2$};
  
  \node at (15.2, 3.2)  {$1$};
 \node at (13.35, 1.8)  {$4$};
  
  
  \draw (10,6) circle [radius=1cm] ;
  
  \coordinate (a3) at (10,7);
\coordinate (a33) at (10,8);
\coordinate (b3) at (11,5.7);
\coordinate (c3) at (9.1,5.7);
  
  \draw[fill=black] (a3) circle [radius=0.15cm] ;
  \draw[fill=black] (b3) circle [radius=0.15cm] ;
  \draw[fill=black] (c3) circle [radius=0.15cm] ;
  
   \draw (a3)--(b3)--(c3)--(a3);
 \draw[dashed] (a3)--(a33);

  \node at (10, 5.4)  {$4$};
  
  \node at (9.5, 6.35)  {$2$};
  
  \node at (10.5, 6.35)  {$3$};

  \node at (10.2, 7.5)  {$1$};

  
  \draw (7,-2) -- (13,-2);
  \node at (10, -2.35)  {$5$};
  
  \draw (b4) -- (b1);
  \node at (6.2, 0.1)  {$4$};
  
  \draw (c2) -- (c5);
  \node at (14.2, 0.1)  {$6$};
  
  \draw (c4) -- (c3);
  
  \draw (b5) -- (b3);
  
  \node at (8.2, 4.2)  {$6$};
  
  \node at (12.4, 4.2)  {$5$};
  
  \node at (10, 2) (vC) {$G$};
  \node at (0, -2.5) (vH) {$H$};

\end{tikzpicture}
\end{center}
\caption{The coloring of $C$ in the case $6l+3$, when $l=2$. The edges of $F$ are dashed.} \label{fig:5CycleExample}
\end{figure}

\medskip 

It is not hard to see that the described coloring is a normal $6$-edge-coloring of $H_{\Delta}$. The proof is complete.
\end{proof}

We are ready to obtain the main result of this section.

\begin{theorem}\label{thm:6clawfreebridgeless} If $G$ is a claw-free bridgeless cubic graph, then $\chi'_{N}(G)\leq 6$. 
\end{theorem}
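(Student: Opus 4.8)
\section*{Proof proposal}

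The plan is to reduce to Oum's structural characterization (Proposition~\ref{prop:OumClawfreebridgelessCharac}) and to treat its three families separately, using Lemma~\ref{lem:HperfectMatching} as the engine for the hardest one. Since a coloring is normal exactly when it is normal on every component, I would first assume $G$ connected. Next I would eliminate parallel edges: if $G$ contains a pair of parallel edges joining $x_1,x_2$, then, as $G$ is bridgeless cubic, the two remaining incident edges $x_1a$, $x_2b$ satisfy $a\neq b$ (otherwise the third edge at $a$ would be a bridge) and form a $2$-edge-cut, so suppressing the gadget into a single edge $ab$ yields a smaller connected claw-free bridgeless cubic graph $G'$. Given a normal $6$-edge-coloring of $G'$ in which $ab$ receives color $\gamma$, I would pull it back by setting $x_1a=x_2b=\gamma$ (so the palettes at $a,b$ are unchanged) and coloring the parallel pair with the two colors of $S(a)\setminus\{\gamma\}$; the parallel edges then become poor, while $x_1a$ and $x_2b$ inherit the poor/rich type of $ab$. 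Iterating, I may assume $G$ is simple, a triple edge being the theta graph, handled directly by a $3$-edge-coloring.

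For simple connected $G$, Proposition~\ref{prop:OumClawfreebridgelessCharac} leaves three cases. If $G\cong K_4$ or $G$ is a ring of diamonds, I would simply exhibit a proper $3$-edge-coloring: coloring each diamond so that its two degree-two vertices both emit the color of the edge joining its two degree-three vertices makes all connections between consecutive diamonds consistent. Recalling that every $3$-edge-coloring is normal (all edges poor), this settles families (1) and (2) with three colors.

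The substance is family (3): here $G$ is obtained from a connected bridgeless cubic graph $H$ by expanding every vertex into a triangle and replacing some edges of $H$ by strings of diamonds. By Theorem~\ref{thm:KaiserKral} the graph $H$ has a perfect matching, so Lemma~\ref{lem:HperfectMatching} gives a normal $6$-edge-coloring $c$ of $H_{\Delta}$. I would keep $c$ on the part of $G$ coming from $H_{\Delta}$ and extend it across each diamond string as a \emph{transparent gadget}. Concretely, if the replaced edge $e=u'v'$ has color $\gamma$ under $c$, I would color every external edge of the string (the two pendants to $u',v'$ and the links between consecutive diamonds) with $\gamma$, and color every diamond internally with the three colors $\{\gamma\}\cup\bigl(S_{c}(u')\setminus\{\gamma\}\bigr)$, placing $\gamma$ on the edge joining its two degree-three vertices. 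A direct check then shows that all five edges of each diamond and every inter-diamond link are poor, that the pendant at $u'$ is poor, and that the pendant at $v'$ is poor or rich according as $e$ was poor or rich under $c$. Since the palettes $S(u'),S(v')$ at the triangle vertices are left unchanged, normality of every edge of $H_{\Delta}$ is preserved, and we conclude $\chi'_{N}(G)\leq 6$.

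The step I expect to be the main obstacle is precisely the verification in family (3): one must confirm that a single uniform internal palette serves both ends of a replaced edge simultaneously, which forces one to track the asymmetric (poor versus rich) behavior of the far pendant, and that the inter-diamond links remain normal irrespective of the length of the string. The parallel-edge reduction of the first paragraph is in fact the same transparent computation in miniature, so once the diamond-string bookkeeping is carried out carefully the remaining cases are routine.
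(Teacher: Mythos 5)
Your proposal is correct, and its skeleton coincides with the paper's: reduce to a connected graph, suppress parallel edges by a two-case-free pull-back, invoke Oum's characterization to dispose of $K_4$ and rings of diamonds with $3$ colors, and then drive the remaining case through Lemma~\ref{lem:HperfectMatching} applied to $H_{\Delta}$. The one place where you genuinely diverge is the treatment of diamonds. The paper works by induction on $|V(G)|$, excising one diamond at a time (contracting it to an edge $u'v'$, applying the inductive hypothesis, and re-expanding), which forces a case split according to whether the contracted edge is poor or rich in the smaller graph --- in the rich case the paper must reach for a sixth colour on the central edge of the diamond. You instead colour an entire string of diamonds in one shot as a ``transparent'' gadget carrying the single palette $S_c(u')$ of one endpoint, with $\gamma=c(u'v')$ on all external edges and on each central edge; this makes every edge of the string poor except the far pendant, which simply inherits the poor/rich type of the replaced edge, so no case distinction is needed and the palettes at $u'$ and $v'$ are untouched. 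Your variant is arguably cleaner (no induction for the diamond step, no rich/poor dichotomy), at the cost of having to argue separately that the suppression in the parallel-edge reduction preserves claw-freeness and bridgelessness --- a point both you and the paper state without detail, but which does check out. Both routes are sound; the paper's induction is more uniform in style (every reduction is ``delete, recolour, re-expand''), while yours localizes all the work in a single explicit gadget colouring.
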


\begin{proof} We prove the theorem by induction on $|V(G)|$. If $|V(G)|=2$, then $G$ is $3$-edge-colorable, hence $\chi'_{N}(G)\leq 3$. Assume that the theorem is true for all claw-free bridgeless cubic graphs $G$ with $|V(G)|<n$, and let us consider a claw-free bridgeless cubic graph $G$ with $|V(G)|=n\geq 4$. Without loss of generality we can assume that $G$ is connected, otherwise the statement follows from inductive hypothesis for components of $G$.

First assume that $G$ contains two vertices $u$ and $v$ that are joined by two parallel edges. Let $u'$ and $v'$ be the neighbors of $u$ and $v$, respectively, that are different from $v$ and $u$. Consider the cubic graph $G'$ defined as follows:
\begin{equation*}
    G'=(G-\{u,v\})+\{u'v'\}. 
\end{equation*} If $G$ contains an edge $u'v'$, then $G'$ will contain two parallel edges $u'v'$. Observe that $G'$ is a claw-free bridgeless cubic graph with $|V(G')|<|V(G)|=n$, hence by inductive hypothesis it admits a normal edge-coloring $f$ with at most $6$ colors. Assume that in $f$ the new edge $u'v'$ is colored with $1$, and the other $2$ edges incident to $u'$ are colored with $2$ and $3$. Consider an edge-coloring of $G$ obtained from $f$ as follows: color the edges $uu'$ and $vv'$ with $1$, one of parallel edges $uv$ with $2$ and the other edge $uv$ with $3$. It can be easily checked that this new coloring is a normal edge-coloring of $G$ with at most $6$ colors.

In the following, we can assume that $G$ is simple. Now, we apply Proposition \ref{prop:OumClawfreebridgelessCharac}. If $G$ is $K_4$ or a ring of diamonds, then $G$ is $3$-edge-colorable, hence $\chi'_{N}(G)\leq 3$. Thus, without loss of generality, we can assume that there is a connected bridgeless cubic graph $H$, such that $G$ can be obtained from $H$ by replacing some edges of $H$ with a string of diamonds and all vertices of $H$ with a triangle.

Let us show that we can also assume that $G$ contains no diamond. On the opposite assumption, consider a diamond $D$ of $G$. Let $u$ and $v$ be the $2$ vertices of $G$ that have degree $2$ in $D$. Let $u'$ and $v'$ be the neighbours of $u$ and $v$, respectively, that lie outside $D$. Consider a cubic graph $G'$ defined as follows:
\begin{equation*}
    G'=(G-V(D))+\{u'v'\}.
\end{equation*} Observe that $G'$ is a claw-free bridgeless cubic graph with $|V(G')|<|V(G)|$, hence by inductive hypothesis it admits a normal edge-coloring $f$ with at most $6$ colors. We consider two cases. \\

Case 1: The edge $u'v'$ is poor with respect to $f$. Assume that $f(u'v')=3$ and the other neighbours of $u'$ and $v'$ are colored with $1$ and $2$. Consider a coloring of $G$ obtained from $f$ by coloring $uu'$ and $vv'$ with $3$, the spanning $4$ cycle of $D$ with $1$ and $2$, alternatively, and the remaining uncolored edge of $D$ with $3$. It can be easily checked that this new coloring is a normal edge-coloring of $G$ with at most $6$ colors. \\

Case 2: The edge $u'v'$ is rich with respect to $f$. Assume that $f(u'v')=5$, the other edges incident to $u'$ are colored with $1$ and $2$, and the other edges incident to $v'$ are colored with $3$ and $4$. Consider a coloring of $G$ obtained from $f$ by coloring $uu'$ and $vv'$ with $5$, the $2$ edges of the spanning $4$ cycle of $D$ that are incident to $u$ with $1$ and $2$, and the other $2$ edges of the cycle with $3$ and $4$, and finally the remaining uncolored edge of $D$ with $6$. It can be easily checked that this new coloring is a normal edge-coloring of $G$ with at most $6$ colors. \\

Thus, we can assume that $G$ contains no diamond, hence $G$ is obtained from $H$ by replacing every vertex of $H$ by a triangle. Since any bridgeless cubic graph contains a perfect matching, by Lemma \ref{lem:HperfectMatching}, $G$ admits a normal $6$-edge-coloring. The proof of the theorem is complete.
\end{proof}


\subsection{Permutation snarks}
\label{subsec:PermSnarks}

In this section, we introduce cycle permutation cubic graphs and show that they admit a normal $6$-edge-coloring.

A \emph{cycle permutation cubic graph} is a cubic graph of order $2n$ which admits a $2$-factor consisting of two disjoint chordless $n$-cycles $C_1$ and $C_2$.

Permutation graphs were first introduced by Chartrand and Harary in 1967~\cite{CH67}, and cycle permutation graphs were given this name in~\cite{Ri79}, but can also be found in~\cite{Ri84} and other references. 

Let $G=C(n,p)$ be a cycle permutation graph, for some permutation $p \in S_n$, having cycles $C_1$ (external) and $C_2$ (internal). We set the notation $p(i)=p_i$ for all $i \in \{0, \ldots, n-1\}$.
Without loss of generality, we can assume $p_0=0$ and we fix the following labelling on the vertices of $G$, indices taken modulo $n$:

\begin{itemize}
\item the vertices of the cycle $C_1$ are $u_0, \ldots, u_{n-1}$ with $u_iu_{i+1} \in E(G)$;
\item the vertices of the cycle $C_2$ are $v_0, \ldots, v_{n-1}$ with $v_iv_{i+1} \in E(G)$;
\item the edges given by the permutation $p$ are $v_iu_{p_i}$ and they form a perfect matching $M$ of $G$.
\end{itemize}

Along the entire presentation lower indices will be taken modulo $n$. The following well-known fact is an easy consequence of the definition of $C(n,p)$:
\begin{proposition}\label{pro:neven}
Let $C(n,p)$ be a cycle permutation graph with $n$ even. Then, $C(n,p)$ is $3$-edge-colorable.
\end{proposition}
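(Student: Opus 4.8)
The plan is to exhibit an explicit proper $3$-edge-coloring, exploiting the fact that when $n$ is even the two chordless $n$-cycles $C_1$ and $C_2$ forming the $2$-factor are both even, and that an even cycle is $2$-edge-colorable. Conceptually, the cleanest way to see the statement is via the standard criterion that a cubic graph is $3$-edge-colorable if and only if it has a perfect matching whose complementary $2$-factor is a disjoint union of even cycles. For $C(n,p)$ the perfect matching $M=\{v_iu_{p_i}\}$ is given, and its complement is precisely $C_1\cup C_2$, two cycles of length $n$; since $n$ is even, both are even, so the criterion applies immediately.

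To keep the argument self-contained I would simply write the coloring down. First I would assign color $3$ to every matching edge of $M$. Then, on $C_1$, I would color the edge $u_iu_{i+1}$ with color $1$ when $i$ is even and with color $2$ when $i$ is odd, and I would do the same on $C_2$ for the edges $v_iv_{i+1}$. Because $n$ is even, this alternation closes up consistently around each cycle: the edge $u_{n-1}u_0$ receives color $2$ and meets the color-$1$ edge $u_0u_1$, and likewise for $C_2$. The upshot is that at every vertex $u_i$ the two incident cycle edges carry the distinct colors $1$ and $2$, while the unique matching edge carries color $3$; the analogous statement holds at each $v_i$.

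The only thing left to check is that the coloring is proper, and this is routine. The sole possible conflict would be between a cycle edge and a matching edge at a common endpoint, but matching edges always receive color $3$ while cycle edges receive only colors $1$ and $2$, so no conflict can arise, and the two cycle edges at any vertex already differ by the alternation. There is no genuine obstacle here: the parity hypothesis that $n$ is even is used in exactly one place, namely to guarantee that the two-coloring of each cycle is consistent all the way around, and everything else is forced. This produces a proper edge-coloring using three colors, so $C(n,p)$ is $3$-edge-colorable.
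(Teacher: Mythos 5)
Your proof is correct and follows exactly the paper's argument: two-color each of the even cycles $C_1$ and $C_2$ alternately with colors $1$ and $2$, then give all edges of the perfect matching $M$ the third color. The added explicit index bookkeeping and the remark about the even-$2$-factor criterion are fine but not needed beyond what the paper states.
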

\begin{proof}
Since $C_1$ and $C_2$ are even cycles, they admit a $2$-edge-coloring with colors $1$ and $2$. We obtain a $3$-edge coloring by giving a third color to all edges of the perfect matching $M$. 
\end{proof}

By Proposition \ref{pro:neven} and the fact that a $3$-edge-colorable cubic graph always admits a normal $5$-edge-coloring,  we can only focus on the case that $n$ is odd. In this case, the permutation graph $C(n,p)$ could be not $3$-edge-colorable. Clearly, the Petersen graph is the cycle permutation graph $P=C(5,(1,4,2,3))$ and it is not $3$-edge-colorable.


\begin{lemma}\label{lem:permsnarks}
Let $G=C(n,p)$ be a cycle permutation graph and $n>5$. Then, there exist $u_iv_j, u_hv_k \in E(G)$ such that both $u_iu_h$ and $v_jv_k$ are not edges of $G$.
\end{lemma}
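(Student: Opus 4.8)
The plan is a counting argument over the $n$ edges of the perfect matching $M$. Recall that $M = \{v_a u_{p_a} : a \in \{0,\dots,n-1\}\}$; I write $e_a = v_a u_{p_a}$ and index the matching edges by $a$. A pair of matching edges $\{e_a,e_b\}$ realizes the configuration required by the lemma precisely when $v_a, v_b$ are non-adjacent on $C_2$, i.e. $|a-b| \not\equiv 1 \pmod n$, and $u_{p_a}, u_{p_b}$ are non-adjacent on $C_1$, i.e. $|p_a - p_b| \not\equiv 1 \pmod n$. Call such a pair \emph{good}; identifying a good pair $\{a,b\}$ gives the desired edges with $i=p_a$, $j=a$, $h=p_b$, $k=b$. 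It therefore suffices to prove that a good pair exists whenever $n>5$.

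First I would count the \emph{bad} (non-good) pairs. A pair $\{a,b\}$ is bad exactly when it lies in $B_1 \cup B_2$, where $B_1 = \{\{a,b\} : |a-b| \equiv 1 \pmod n\}$ and $B_2 = \{\{a,b\} : |p_a - p_b| \equiv 1 \pmod n\}$. The cyclic order $0,1,\dots,n-1$ has exactly $n$ consecutive pairs, so $|B_1| = n$; since $p$ is a bijection, $B_2$ is the $p$-preimage of the consecutive pairs read off on the $u$-side, so $|B_2| = n$ as well. By the union bound, the number of bad pairs is at most $2n$.

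Then I would compare with the total. There are $\binom{n}{2} = \tfrac{n(n-1)}{2}$ pairs of distinct indices, and $\tfrac{n(n-1)}{2} - 2n = \tfrac{n(n-5)}{2}$, which is strictly positive exactly when $n>5$. Hence for $n>5$ at least one pair must be good, and the corresponding matching edges $e_a$, $e_b$ are the required $u_iv_j, u_hv_k$.

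I do not expect a genuine obstacle here; the only care needed is bookkeeping so that the bound $2n$ lines up exactly with the threshold $n>5$. Concretely, one must confirm that $|B_1| = |B_2| = n$ (which uses $n \geq 3$, so that $a-1,a,a+1$ are distinct and the consecutive-pair count is genuinely $n$) and that every bad pair really is captured by $B_1 \cup B_2$. I would deliberately use this inclusion bound rather than a per-vertex ``each edge has at most four bad partners'' count, since the latter would force an argument about possible coincidences among an edge's up-to-four bad neighbors, whereas the union bound sidesteps that entirely and produces the clean inequality $\binom{n}{2} > 2n \iff n>5$.
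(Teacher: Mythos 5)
Your proof is correct and is essentially the same counting argument as the paper's: the paper also counts the $\binom{n}{2}$ pairs, discards the $n$ that are edges of one cycle and the (at most) $n$ whose images are edges of the other, and uses $\binom{n}{2}-n>n \iff n>5$, which is exactly your union bound $\binom{n}{2}>2n$. The only difference is presentational (you index by pairs of matching edges, the paper by $2$-sets of vertices of $C_1$), so there is nothing to change.
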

\begin{proof}
Observe that the condition $n>5$ is equivalent to $\binom{n}{2}-n>n$. Now, note that the number of $2$-sets $\{u_i, u_h\}$ is $\binom{n}{2}$ and only $n$ of them are edges of $G$, thus there are $\binom{n}{2}-n$ 2-sets which are not edges. Now, if we look at the corresponding $2$-set $\{v_j, v_k\}$ consisting of the two vertices of $C_2$ adjacent to $u_i$ and $u_h$, then only $n$ of them are edges of $G$. Since $\binom{n}{2}-n>n$, we have that there is a $2$-set $\{u_i, u_h\}$ which is not an edge whose corresponding $2$-set $\{v_j,v_k\}$ does not form an edge in $G$.
\end{proof}

\begin{remark}
Previous lemma cannot be extended to the case $n=5$ due to the Petersen graph.
\end{remark}

The following represents the main result of this section:

\begin{theorem}\label{thm:PermGraphs}
Let $C(n,p)$ be a cycle permutation graph. Then, $\chi'_{N}(C(n,p))\leq 6$. 
\end{theorem}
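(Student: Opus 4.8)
The plan is to argue by induction on the (odd) integer $n$, reducing a large permutation graph to a smaller one of the same type by means of the two matching edges supplied by Lemma~\ref{lem:permsnarks}, and then extending a normal $6$-edge-coloring backwards.

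First I would dispose of the easy cases. By Proposition~\ref{pro:neven}, if $n$ is even then $C(n,p)$ is $3$-edge-colorable and hence admits a normal $5$-edge-coloring; so it suffices to treat odd $n$. The base of the induction is $n=5$ (the Remark after Lemma~\ref{lem:permsnarks} explains why the lemma cannot be used here): one checks directly that $C(5,p)$ is either $3$-edge-colorable, and then normal with $5$ colors, or isomorphic to the Petersen graph $P_{10}$, which satisfies $P_{10}\prec P_{10}$ and therefore admits a normal $5$-edge-coloring by Proposition~\ref{prop:JaegerNormalColor}. In either case $\chi'_N(C(5,p))\le 5$, and the few smaller orders are immediate.

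For the inductive step, let $n>5$ be odd and apply Lemma~\ref{lem:permsnarks} to obtain matching edges $e_1=u_iv_j$ and $e_2=u_hv_k$ with $u_iu_h\notin E(G)$ and $v_jv_k\notin E(G)$. I would delete $e_1,e_2$ and suppress the four resulting degree-two vertices $u_i,u_h$ on $C_1$ and $v_j,v_k$ on $C_2$. Because $u_iu_h$ and $v_jv_k$ are non-edges, the suppressions produce two chordless $(n-2)$-cycles, joined by the $n-2$ surviving matching edges; one checks that the result is again a cycle permutation graph $G'=C(n-2,p')$, necessarily of smaller order. By the induction hypothesis $G'$ carries a normal $6$-edge-coloring $c'$.

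The core of the argument, and the step I expect to be the main obstacle, is to extend $c'$ back to $C(n,p)$. Reversing the reduction subdivides the four edges $u_{i-1}u_{i+1}$, $u_{h-1}u_{h+1}$, $v_{j-1}v_{j+1}$, $v_{k-1}v_{k+1}$ (reinstating $u_i,u_h,v_j,v_k$) and re-inserts $e_1,e_2$; I would recolor only the eight new half-edges together with $e_1,e_2$, leaving $c'$ untouched elsewhere. The difficulty is that subdividing an edge changes the color-neighborhoods of the edges meeting the reinstated vertices, so the new colors must be chosen to keep the coloring proper and simultaneously restore normality of every affected edge, both the genuinely new ones and their neighbors. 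With six colors available there is enough slack, but establishing this rigorously needs a case analysis driven by the local pattern of $c'$ around the subdivided edges and their matching partners (whether each was poor or rich, and whether the two local regions around $e_1$ and $e_2$ overlap, which can happen when $u_i,u_h$ or $v_j,v_k$ are close along their cycle). Exhibiting a valid color assignment in each of these configurations is where the real work lies, the even and base cases above being routine by comparison.
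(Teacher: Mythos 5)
Your reduction strategy contains a genuine gap at exactly the point you flag as ``where the real work lies'': the extension of the normal $6$-edge-coloring $c'$ of $C(n-2,p')$ back to $C(n,p)$ is never carried out, and it is not clear that it can be. When you re-subdivide $u_{i-1}u_{i+1}$ into $u_{i-1}u_i$ and $u_iu_{i+1}$, properness at $u_i$ forces at least one of the two new half-edges to receive a color different from $c'(u_{i-1}u_{i+1})$; this changes the palette at $u_{i-1}$ or $u_{i+1}$ and can destroy the normality of the \emph{old} edges $u_{i-2}u_{i-1}$, $u_{i+1}u_{i+2}$ and of the matching edges at those vertices, none of which you are allowed to recolor. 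The same happens at the other three subdivided edges, and the interactions (e.g.\ when $u_h$ is close to $u_i$ along $C_1$) multiply the configurations. An arbitrary normal $6$-edge-coloring of the smaller graph need not be extendable, so the induction hypothesis as stated is too weak; you would need to prove a stronger, extension-friendly statement, and no such case analysis is supplied. As it stands the proposal is a correct reduction of the problem to an unproved (and possibly false) local extension claim, not a proof.

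For comparison, the paper uses Lemma~\ref{lem:permsnarks} quite differently and avoids induction altogether. It builds a nowhere-zero $\mathbb{Z}_2^3$-flow $\phi$ on $C(n,p)$ directly (following Lemma 5.2 of \cite{HolySkoJCTB2004}), prescribing $\phi(f)=001$, $\phi(g)=010$ and $\phi(e)=011$ on the remaining matching edges --- consistent because $n$ is odd, so the matching values sum to zero --- and then extending along the two cycles. Such a flow is automatically a normal $7$-edge-coloring \cite{Normal7flows}, and since the first coordinate separates $M$ from its complement, recoloring $g$ with the value of $f$ keeps the coloring proper and normal precisely because Lemma~\ref{lem:permsnarks} guarantees $f$ and $g$ share no common neighboring edge; the value $010$ then disappears, leaving six colors. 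The only local verification needed is at the ends of $f$ and $g$, which is why the global flow construction succeeds where your edge-by-edge surgery stalls. If you want to salvage your approach, you would have to formulate and prove the precise extendability lemma you are implicitly assuming; otherwise the flow argument is the shorter path.
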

\begin{proof}
If $n\leq 5$, then $C(n,p)$ has at most $10$ vertices. Hence, it is either $3$-edge-colorable or the Petersen graph: in both cases it admits a normal edge-coloring with at most $5$ colors.

From now on we can assume $n>5$. Then, it follows by Lemma \ref{lem:permsnarks} that the graph $C(n,p)$ has two edges $f=u_iv_j$, $g=u_hv_k$ such that the ends of $f$ are not adjacent to the ends of $g$.

Now, we exhibit a normal $6$-edge-coloring of $C(n,p)$. Firstly, we construct a specific normal $7$-edge-coloring and then we modify it to obtain a normal $6$-edge-coloring.

Denote by $\{001,010,100,011,101,110,111\}$ the nonzero elements of the elementary abelian group $\mathbb{Z}_2^3$. Set $\phi(f)=001$, $\phi(g)=010$ and $\phi(e)=011$ for all other edges $e \in M $. Observe that since we can assume that $n$ is odd (Proposition \ref{pro:neven}), we have: $\sum_{e \in M} \phi(e)=000$. Define $\phi(u_0u_1)=\phi(v_0v_1)=100$, and extend $\phi$ to a nowhere-zero $\mathbb{Z}_2^3$-flow $\phi$ of $C(n,p)$ as it is done in the proof of Lemma 5.2 in \cite{HolySkoJCTB2004}. As it is argued in \cite{HolySkoJCTB2004}, this is possible since  the flow-value of all other edges of $C(n,p)$ is uniquely induced by the values of $\phi$ already assigned and by the fact that the sum of the flows on the edges incident a given vertex must be zero (i.e. $000$).

Every nowhere-zero $\mathbb{Z}_2^3$-flow $\phi$ can be seen as a normal $7$-edge-coloring of $C(n,p)$ (see Theorem 5 in \cite{Normal7flows}). Moreover, the first entry of the flow on edges of $M$ is $0$, and the first entry of the flow on edges outside $M$ is $1$. Hence no edge of $M$ has flow value equal to that of an edge lying outside $M$. Now we slightly modify $\phi$ to obtain a normal $6$-edge-coloring of $C(n,p)$. Let $c$ be the edge-coloring of $C(n,p)$ defined in the following way:
$$c(e)=\phi(e) \text{ for all } e \in E(C(n,p)) \setminus \{f,g\},$$ $$c(g)=c(f)=\phi(f).$$

Observe that $c$ is a $6$-edge-coloring (as it misses the value $010$). Moreover, $c$ is a normal edge-coloring since $\phi$ is a normal edge-coloring and the two edges $f$ and $g$ are not incident to a common edge. The proof is complete.
\end{proof}

\subsection{Treelike snarks}
\label{subsec:TreeLikeSnarks}

In this subsection, we verify Conjecture \ref{conj:6normalBridgelessCubic} in the class of treelike snarks \cite{TreeLike}. First, we start with the necessary definitions. In the subsection, we view each edge of a graph as comprised of two semi-edges. Let $P_0$ be the $5$-zone from Figure \ref{fig:PetersenFragment}, where the loose semi-edges are labeled as $b_1,...,b_5$.


\begin{figure}[ht]
\centering

  	\begin{center}

		\begin{tikzpicture}[scale=0.75]
			
			\node at (6.2, 0.86) {$b_1$};
			\node at (4.7, -1.3) {$b_2$};
			
			\node at (4.35, -0.35) {$b_3$};
			
			\node at (-0.25, -1.4) {$b_4$};
			\node at (-2.25, 0) {$b_5$};
			
			\tikzstyle{every node}=[circle, draw, fill=black!50,
                        inner sep=0pt, minimum width=4pt]
																								
			\node[circle,fill=black,draw] at (0,0) (n00) {};
			\node[circle,fill=black,draw] at (1,0) (n10) {};
			\node[circle,fill=black,draw] at (2,0) (n20) {};

			\node[circle,fill=black,draw] at (0,1) (n01) {};
			\node[circle,fill=black,draw] at (2,1) (n21) {};
			
			\node[circle,fill=black,draw] at (0,2) (n02) {};
			\node[circle,fill=black,draw] at (1,2) (n12) {};
			\node[circle,fill=black,draw] at (2,2) (n22) {};
			
			\node[circle,fill=black,draw] at (3,-1) (n3m1) {};
			\node[circle,fill=black,draw] at (5,1) (n51) {};
			\node[circle,fill=black,draw] at (4,0) (n40) {};

			\draw (0,2) -- (-2,0);
			\draw (2,0) -- (0,-1.5);
			
			\draw (3,-1) -- (4.5,-1.5);
			\draw (5,1) -- (6,0.66);

			\path[every node]
			(n00) edge  (n10)
			edge (n01)
			edge (n3m1)
			
			(n10) edge  (n20)
			edge (n12)
			
			(n01) edge  (n21)
			
			(n02) edge  (n12)
			
			(n12) edge  (n22)
			
			(n01) edge (n02)
			
			(n20) edge (n21)
			(n21) edge (n22)
			(n22) edge (n51)
			(n40) edge (n3m1)
			edge (n51)
			edge [bend left] (3,-2)


			;

		\end{tikzpicture}
																
	\end{center}
	\caption{The $5$-zone $P_{0}$.}\label{fig:PetersenFragment}

\end{figure}

A Halin graph \cite{Halin,Corneujols} is a plane graph that is obtained from a planar representation of a tree without degree-two vertices by joining the leaves of the tree in a cycle. The cycle has as set of its vertices the leaves of the tree. We assume that the leaves of the tree are $l_1,..., l_n$ and this order is the clockwise order of the cycle. Now, let $K$ be a cubic Halin graph with $|V(K)|\geq 4$, and let $T$ and $C$ be the corresponding tree and the cycle of $K$, respectively. The treelike snark $G(T,C)$ \cite{TreeLike} is obtained as follows: take $n$ copies of the 5-zone $P_0$, and identify the copy of the unique end of $b_3$ in the $i$th copy $P_{0}^{i}$ with the leaf $l_i$. Then join $b_{4}^{i}$ to $b_{2}^{i+1}$ and $b_{5}^{i}$ to $b_{1}^{i+1}$ for $i,i+1\in [n]$.

%

Now, we are going to obtain the main result of this subsection:

\begin{theorem}
\label{thm:TreeLike} For any treelike snark $G(T,C)$, we have $\chi'_N(G(T,C))\leq 6$.
\end{theorem}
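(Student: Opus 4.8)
The plan is to exploit the modular construction of $G(T,C)$, which is built from $n$ copies of the single Petersen fragment $P_0$: consecutive copies $P_0^i$, $P_0^{i+1}$ are glued along the ring joins $b_4^i$--$b_2^{i+1}$ and $b_5^i$--$b_1^{i+1}$, while each copy meets the skeleton tree $T$ through the leaf edge $b_3^i$ at $l_i$. Since $P_0$ is a fragment of $P_{10}$, and $P_{10}$ carries a normal $5$-edge-coloring in which every edge is rich, I would color each $P_0^i$ as a piece of such a normally colored Petersen graph; then every edge inside the fragment, including those meeting the five semi-edges, is normal with respect to the colors that the Petersen embedding prescribes on $b_1,\dots,b_5$. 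The task reduces to choosing, for every pair of adjacent copies, color permutations that agree on the shared semi-edges, so that each ring join behaves exactly like an interior edge of $P_{10}$ and is therefore rich. Concretely I would list, up to permutation of the five colors and the symmetries of $P_0$, the admissible boundary patterns on $(b_1,b_2,b_3,b_4,b_5)$ and the matching rule telling which pattern on $(b_4,b_5)$ of $P_0^i$ can be paired with which pattern on $(b_2,b_1)$ of $P_0^{i+1}$.

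With the matching rule in hand I would color the fragments cyclically: fixing an admissible pattern on $P_0^1$ and matching across each successive join forces the coloring of $P_0^2,\dots,P_0^n$, each determined up to the color permutation imposed by its join. Every interior fragment edge stays normal (normality is invariant under permuting colors) and every matched join is rich by construction, so with five colors the sole possible failure lies at the final join that closes the ring back onto $P_0^1$. The composite of the individual join-permutations is a monodromy $\sigma$ on the five colors; if $\sigma$ permits a consistent closure we already have a normal $5$-edge-coloring, and otherwise a single mismatch survives at the closing join. The crux of the proof, and the step I expect to be the main obstacle, is to confine this residual mismatch to one ring connection and to absorb it by recoloring a bounded set of boundary edges there with the sixth color, after which a short case check confirms that the recolored edges and all of their neighbors remain poor or rich.

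It remains to color the interior edges of $T$, the leaf edges being the already-fixed $b_3^i$; note that every fragment edge incident to $l_i$ is normal independently of the tree, since its whole neighborhood lies inside $P_0^i$ together with the prescribed color of $b_3^i$. Rooting $T$ and sweeping from the leaves toward the root, at each degree-$3$ vertex I have six colors and no cyclic constraint, so I can make each tree edge rich: at the parent of a leaf $l_i$ it suffices to take the two remaining tree-edge colors from the three colors missing at $l_i$, and analogous local choices work at interior vertices, with the acyclicity of $T$ preventing any global conflict (a final local correction settles the few vertices carrying several leaf children). Normality of $G(T,C)$ then follows by assembling the three edge classes---interior fragment edges rich from the Petersen coloring, ring edges normal by the matching rule and the closure repair, and tree edges rich by the sweep. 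Hence the only genuinely delicate point is the ring-closure argument above; the boundary bookkeeping and the tree sweep, although laborious, are forced respectively by the rigidity of the Petersen fragment and by the acyclicity of $T$ together with the slack of a sixth color.
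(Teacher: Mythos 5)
Your proposal leaves the genuinely hard steps undone, and its starting premise is already problematic. The block $P_0$ is not a fragment of the Petersen graph: it has eleven vertices (the eight-vertex piece obtained from $P_{10}$ by deleting two adjacent vertices, plus a new three-vertex path carrying $b_1$, $b_2$, $b_3$), so there is no ``Petersen embedding'' that prescribes colors on $(b_1,\dots,b_5)$ making every internal edge rich; the normality of the edges around the three extra vertices, and of the two fragment edges at $l_i$ (whose closed neighborhoods contain the tree edge $b_3^i$ and the ring edges $b_1^i,b_2^i$), has to be argued from scratch. More importantly, the two places where you defer the work are exactly where the difficulty of the theorem lives. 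The ring-closure repair --- absorbing the monodromy mismatch at one join by recoloring ``a bounded set of boundary edges'' with the sixth color and then doing ``a short case check'' --- is not a routine verification: treelike snarks are snarks precisely because such a closure cannot be completed with five colors, and you give no candidate recoloring and no argument that a local fix with one extra color always exists for every possible mismatch permutation. Similarly, the tree sweep breaks down at internal vertices of $T$ with two or more leaf children: the colors of the leaf edges $b_3^i$ are already fixed by the fragment colorings and may coincide, destroying properness, and the promised ``final local correction'' would have to propagate back into the fragments you have already colored.

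For comparison, the paper avoids both problems by a different decomposition: it contracts each block to two vertices $z_i,t_i$, obtaining a $4$-regular multigraph $H$ whose underlying graph is a cycle, colors $H$ with $\{1,2,3\}$ so that the four edges at each $z_i$ share a color and consecutive blocks differ, independently gives $T$ a proper $3$-edge-coloring with $\{4,5,6\}$ (which settles all tree/leaf conflicts up front), and then extends block by block via an explicit finite case analysis (cases A1, A2, B1, B2), with two further explicit cases (C1, C2) handling the closure of the ring. If you want to salvage your approach, you would need to replace the appeal to a Petersen coloring with explicit colorings of $P_0$ for each boundary pattern, prove the closure repair as a concrete case analysis, and decouple the tree coloring from the fragment colorings --- at which point you would essentially be reconstructing the paper's argument.
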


\begin{proof} Let $G(T,C)$ be a treelike snark, and let $K$ be the corresponding cubic Halin graph that is composed of the tree $T$ and the cycle $C$. By definition any vertex in $T$ is either of degree one or degree three. Let us consider a graph $H$ obtained from $G(T,C)$ as follows: remove the vertices of $T$ that are not leaves, and for all $i\in [n]$ contract the subpath of $P_0^{i}$ with three vertices containing $l_i$ and its two neighbors to a vertex $t_i$. Moreover, contract the remaining eight vertices of $P_0^{i}$ to $z_i$ (see Figure \ref{fig:PetersenFragment}). We keep the parallel edges that arise during the contraction.

It is easy to see that $H$ is a 4-regular graph such that each vertex of $H$ has two neighbors (the underlying simple graph of $H$ is a cycle). Moreover, $H$ contains even number of vertices ($V(H)=\{z_1,t_1,...,z_n,t_n\}$). Since $|V(K)|\geq 4$, we have that $|V(H)|\geq 6$. 

Now, we are going to describe a normal 6-edge-coloring of $G(T,C)$. Let our six colors be $1,...,6$. There is always a non-proper edge coloring $f_H$ of $H$ with colors $1,2,3$, such that the four edges incident to $z_i$ have the same color, and the color of the four edges incident with $z_i$ and the color of the four edges incident with $z_{i+1}$ are different for all $i,i+1\in [n]$.

Since $\{z_i|i\in [n]\}$ is an independent set of $H$, such coloring always exists. For $n\geq 3$ (i.e. $|V(H)|\geq 6$), all three colors 1,2,3 could be used. Then the multi edges $z_it_i$ and $t_iz_{i+1}$ have different colors. Without loss of generality, assume that $t_n$ is incident with edges of colors 1
and 2, and $t_1$ is incident with edges of colors 1 and 3. Below, we assume that $a,b,c$ is a permutation of $1,2,3$ (that is, $\{a,b,c\}=\{1,2,3\}$).

%
%

We extend $f_H$ to a 6-edge-coloring of $G(T,C)$ as follows. First, since the maximum degree in $T$ is three, we find a proper 3-edge-coloring $f_T$ of $T$ with colors $4,5,6$. Now, we are going to obtain the coloring around the eight vertices corresponding $z_j$ if we had the coloring around the eight vertices corresponding to $z_{j-1}$ (Figure \ref{fig:CaseA1}). We split the proof in two cases and four subcases: we assume that $\alpha, \beta, \gamma$ is a permutation of $4,5,6$ (that is, $\{\alpha, \beta, \gamma\}=\{4,5,6\}$).\\

Case A: Assume that the four edges incident to $z_j$ are colored with $a$, and the edges $z_{j-1}t_{j-1}$ and $t_jz_{j+1}$ are colored
with the same color, say $b$. We differ two subcases A1 and A2 depending whether the pendant edges of $T$ corresponding to $t_{j-1}$ and $t_j$ have the same color or not. When these colors are the same (subcase A1, these edges are of color $\alpha$), the edge-coloring is described on Figure \ref{fig:CaseA1}. When these colors are different (Subcase A2, these edges are of colors $\alpha$ and $\beta$, respectively), the corresponding edge-coloring is described on Figure \ref{fig:CaseA2}.\\

\begin{figure}[ht]
\centering
\begin{minipage}[b]{.5\textwidth}
  \begin{center}

\begin{tikzpicture}[scale=0.75]
			
			
			\node at (-1.5, 1.7) {$a$};
			\node at (-3.75, 1.1) {$b$};
			\node at (-2.3, 0.7) {$\beta$};
			\node at (-1.3, -0.3) {$\gamma$};
			\node at (-2.3, -0.3) {$\alpha$};
			
			\node at (3.5, 1.7) {$a$};
			\node at (5.5, 1.1) {$b$};
			\node at (4.3, 0.7) {$\beta$};
			\node at (3.3, -0.3) {$\gamma$};
			\node at (4.3, -0.3) {$\alpha$};
			
			\node at (-0.35, -0.95) {$a$};
			\node at (-2.7, -1.6) {$b$};
			
			\node at (2.15, -0.95) {$a$};
			\node at (4.6, -1.6) {$b$};
			
			
			\node at (0.5, 0.3) {$b$};
			\node at (1.5, 0.2) {$c$};
			
			
		    \node at (0.5, 2.3) {$b$};
			\node at (1.5, 2.2) {$c$};
			
			
		    \node at (-0.3, 1.5) {$\beta$};
			\node at (-0.3, 0.5) {$\gamma$};
			
			
		    \node at (2.3, 1.5) {$\gamma$};
			\node at (2.3, 0.5) {$\beta$};
			
			
		    \node at (1.3, 1.5) {$\alpha$};
			\node at (0.5, 1.2) {$\alpha$};

			\tikzstyle{every node}=[circle, draw, fill=black!50,
                        inner sep=0pt, minimum width=4pt]
																								
			\node[circle,fill=black,draw] at (0,0) (n00) {};
			\node[circle,fill=black,draw] at (1,0) (n10) {};
			\node[circle,fill=black,draw] at (2,0) (n20) {};

			\node[circle,fill=black,draw] at (0,1) (n01) {};
			\node[circle,fill=black,draw] at (2,1) (n21) {};
			
			\node[circle,fill=black,draw] at (0,2) (n02) {};
			\node[circle,fill=black,draw] at (1,2) (n12) {};
			\node[circle,fill=black,draw] at (2,2) (n22) {};
			
			\node[circle,fill=black,draw] at (3,-1) (n3m1) {};
			\node[circle,fill=black,draw] at (5,1) (n51) {};
			\node[circle,fill=black,draw] at (4,0) (n40) {};
			
			\node[circle,fill=black,draw] at (-1,-1) (nm1m1) {};
			\node[circle,fill=black,draw] at (-3,1) (nm31) {};
			\node[circle,fill=black,draw] at (-2,0) (nm20) {};

			
			\draw (3,-1) -- (4.5,-1.5);
			\draw (5,1) -- (6,0.66);
			
			\draw (-1,-1) -- (-2.5,-1.5);
			\draw (-3,1) -- (-4,0.66);

			\path[every node]
			(n00) edge  (n10)
			edge (n01)
			edge (n3m1)
			
			(n10) edge  (n20)
			edge (n12)
			
			(n01) edge  (n21)
			
			(n02) edge  (n12)
			edge (nm31)
			
			(n12) edge  (n22)
			
			(n01) edge (n02)
			
			(n20) edge (n21)
			edge (nm1m1)
			(n21) edge (n22)
			(n22) edge (n51)
			(n40) edge (n3m1)
			edge (n51)
			edge [bend left] (3,-2)
			
			(nm20) edge (nm1m1)
			edge (nm31)
			edge [bend right] (-1,-2)


			;

		\end{tikzpicture}
																
	\end{center}
	
	\caption{The case A1.}\label{fig:CaseA1}
\end{minipage}%
\begin{minipage}[b]{.5\textwidth}
  	\begin{center}

\begin{tikzpicture}[scale=0.75]
			
			
			\node at (-1.5, 1.7) {$a$};
			\node at (-3.75, 1.1) {$b$};
			\node at (-2.3, 0.7) {$\beta$};
			\node at (-1.3, -0.3) {$\gamma$};
			\node at (-2.3, -0.3) {$\alpha$};
			
			\node at (3.5, 1.7) {$a$};
			\node at (5.5, 1.1) {$b$};
			\node at (4.3, 0.7) {$\gamma$};
			\node at (3.3, -0.3) {$\alpha$};
			\node at (4.3, -0.3) {$\beta$};
			
			\node at (-0.35, -0.95) {$a$};
			\node at (-2.7, -1.6) {$b$};
			
			\node at (2.15, -0.95) {$a$};
			\node at (4.6, -1.6) {$b$};
			
			
			\node at (0.5, 0.3) {$b$};
			\node at (1.5, 0.2) {$c$};
			
			
		    \node at (0.5, 2.3) {$c$};
			\node at (1.5, 2.2) {$b$};
			
			
		    \node at (-0.3, 1.5) {$\gamma$};
			\node at (-0.3, 0.5) {$\alpha$};
			
			
		    \node at (2.3, 1.5) {$\gamma$};
			\node at (2.3, 0.5) {$\alpha$};
			
			
		    \node at (1.3, 1.5) {$\beta$};
			\node at (0.5, 1.2) {$\beta$};
			
			\tikzstyle{every node}=[circle, draw, fill=black!50,
                        inner sep=0pt, minimum width=4pt]
																								
			\node[circle,fill=black,draw] at (0,0) (n00) {};
			\node[circle,fill=black,draw] at (1,0) (n10) {};
			\node[circle,fill=black,draw] at (2,0) (n20) {};

			\node[circle,fill=black,draw] at (0,1) (n01) {};
			\node[circle,fill=black,draw] at (2,1) (n21) {};
			
			\node[circle,fill=black,draw] at (0,2) (n02) {};
			\node[circle,fill=black,draw] at (1,2) (n12) {};
			\node[circle,fill=black,draw] at (2,2) (n22) {};
			
			\node[circle,fill=black,draw] at (3,-1) (n3m1) {};
			\node[circle,fill=black,draw] at (5,1) (n51) {};
			\node[circle,fill=black,draw] at (4,0) (n40) {};
			
			\node[circle,fill=black,draw] at (-1,-1) (nm1m1) {};
			\node[circle,fill=black,draw] at (-3,1) (nm31) {};
			\node[circle,fill=black,draw] at (-2,0) (nm20) {};

			
			\draw (3,-1) -- (4.5,-1.5);
			\draw (5,1) -- (6,0.66);
			
			\draw (-1,-1) -- (-2.5,-1.5);
			\draw (-3,1) -- (-4,0.66);

			\path[every node]
			(n00) edge  (n10)
			edge (n01)
			edge (n3m1)
			
			(n10) edge  (n20)
			edge (n12)
			
			(n01) edge  (n21)
			
			(n02) edge  (n12)
			edge (nm31)
			
			(n12) edge  (n22)
			
			(n01) edge (n02)
			
			(n20) edge (n21)
			edge (nm1m1)
			(n21) edge (n22)
			(n22) edge (n51)
			(n40) edge (n3m1)
			edge (n51)
			edge [bend left] (3,-2)
			
			(nm20) edge (nm1m1)
			edge (nm31)
			edge [bend right] (-1,-2)


			;

		\end{tikzpicture}
																
	\end{center}
	\caption{The case A2.}\label{fig:CaseA2}
\end{minipage}
\end{figure}

Case B: Assume that the four edges incident to $z_j$ are colored with $a$, and the edges $z_{j-1}t_{j-1}$ and $t_jz_{j+1}$ are colored with different colors, say $b$ and $c$, respectively. We differ two subcases B1 and B2 depending whether the pendant edges of $T$ corresponding to $t_{j-1}$ and $t_j$ have the same color or not. When these colors are the same (subcase B1, these edges are of color $\alpha$), the edge-coloring is described on Figure \ref{fig:CaseB1}. When these colors are different (Subcase B2, these edges are of colors $\alpha$ and $\beta$, respectively), the corresponding edge-coloring is described on Figure \ref{fig:CaseB2}.\\

\begin{figure}[ht]
\centering
\begin{minipage}[b]{.5\textwidth}
  \begin{center}

\begin{tikzpicture}[scale=0.75]
			
			
			\node at (-1.5, 1.7) {$a$};
			\node at (-3.75, 1.1) {$b$};
			\node at (-2.3, 0.7) {$\beta$};
			\node at (-1.3, -0.3) {$\gamma$};
			\node at (-2.3, -0.3) {$\alpha$};
			
			\node at (3.5, 1.7) {$a$};
			\node at (5.5, 1.1) {$c$};
			\node at (4.3, 0.7) {$\beta$};
			\node at (3.3, -0.3) {$\gamma$};
			\node at (4.3, -0.3) {$\alpha$};
			
			\node at (-0.35, -0.95) {$a$};
			\node at (-2.7, -1.6) {$b$};
			
			\node at (2.15, -0.95) {$a$};
			\node at (4.6, -1.6) {$c$};
			
			
			\node at (0.5, 0.3) {$b$};
			\node at (1.5, 0.2) {$c$};
			
			
		    \node at (0.5, 2.3) {$c$};
			\node at (1.5, 2.2) {$b$};
			
			
		    \node at (-0.3, 1.5) {$\alpha$};
			\node at (-0.3, 0.5) {$\beta$};
			
			
		    \node at (2.3, 1.5) {$\alpha$};
			\node at (2.3, 0.5) {$\beta$};
			
			
		    \node at (1.3, 1.5) {$\gamma$};
			\node at (0.5, 1.2) {$\gamma$};
			
			\tikzstyle{every node}=[circle, draw, fill=black!50,
                        inner sep=0pt, minimum width=4pt]
																								
			\node[circle,fill=black,draw] at (0,0) (n00) {};
			\node[circle,fill=black,draw] at (1,0) (n10) {};
			\node[circle,fill=black,draw] at (2,0) (n20) {};

			\node[circle,fill=black,draw] at (0,1) (n01) {};
			\node[circle,fill=black,draw] at (2,1) (n21) {};
			
			\node[circle,fill=black,draw] at (0,2) (n02) {};
			\node[circle,fill=black,draw] at (1,2) (n12) {};
			\node[circle,fill=black,draw] at (2,2) (n22) {};
			
			\node[circle,fill=black,draw] at (3,-1) (n3m1) {};
			\node[circle,fill=black,draw] at (5,1) (n51) {};
			\node[circle,fill=black,draw] at (4,0) (n40) {};
			
			\node[circle,fill=black,draw] at (-1,-1) (nm1m1) {};
			\node[circle,fill=black,draw] at (-3,1) (nm31) {};
			\node[circle,fill=black,draw] at (-2,0) (nm20) {};

			
			\draw (3,-1) -- (4.5,-1.5);
			\draw (5,1) -- (6,0.66);
			
			\draw (-1,-1) -- (-2.5,-1.5);
			\draw (-3,1) -- (-4,0.66);

			\path[every node]
			(n00) edge  (n10)
			edge (n01)
			edge (n3m1)
			
			(n10) edge  (n20)
			edge (n12)
			
			(n01) edge  (n21)
			
			(n02) edge  (n12)
			edge (nm31)
			
			(n12) edge  (n22)
			
			(n01) edge (n02)
			
			(n20) edge (n21)
			edge (nm1m1)
			(n21) edge (n22)
			(n22) edge (n51)
			(n40) edge (n3m1)
			edge (n51)
			edge [bend left] (3,-2)
			
			(nm20) edge (nm1m1)
			edge (nm31)
			edge [bend right] (-1,-2)


			;

		\end{tikzpicture}
																
	\end{center}
	
	\caption{The case B1.}\label{fig:CaseB1}
\end{minipage}%
\begin{minipage}[b]{.5\textwidth}
  	\begin{center}

\begin{tikzpicture}[scale=0.75]
			
			
			\node at (-1.5, 1.7) {$a$};
			\node at (-3.75, 1.1) {$b$};
			\node at (-2.3, 0.7) {$\beta$};
			\node at (-1.3, -0.3) {$\gamma$};
			\node at (-2.3, -0.3) {$\alpha$};
			
			\node at (3.5, 1.7) {$a$};
			\node at (5.5, 1.1) {$c$};
			\node at (4.3, 0.7) {$\alpha$};
			\node at (3.3, -0.3) {$\gamma$};
			\node at (4.3, -0.3) {$\beta$};
			
			\node at (-0.35, -0.95) {$a$};
			\node at (-2.7, -1.6) {$b$};
			
			\node at (2.15, -0.95) {$a$};
			\node at (4.6, -1.6) {$c$};
			
			
			\node at (0.5, 0.3) {$b$};
			\node at (1.5, 0.2) {$c$};
			
			
		    \node at (0.5, 2.3) {$c$};
			\node at (1.5, 2.2) {$b$};
			
			
		    \node at (-0.3, 1.5) {$\alpha$};
			\node at (-0.3, 0.5) {$\beta$};
			
			
		    \node at (2.3, 1.5) {$\beta$};
			\node at (2.3, 0.5) {$\alpha$};
			
			
		    \node at (1.3, 1.5) {$\gamma$};
			\node at (0.5, 1.2) {$\gamma$};
			
			\tikzstyle{every node}=[circle, draw, fill=black!50,
                        inner sep=0pt, minimum width=4pt]
																								
			\node[circle,fill=black,draw] at (0,0) (n00) {};
			\node[circle,fill=black,draw] at (1,0) (n10) {};
			\node[circle,fill=black,draw] at (2,0) (n20) {};

			\node[circle,fill=black,draw] at (0,1) (n01) {};
			\node[circle,fill=black,draw] at (2,1) (n21) {};
			
			\node[circle,fill=black,draw] at (0,2) (n02) {};
			\node[circle,fill=black,draw] at (1,2) (n12) {};
			\node[circle,fill=black,draw] at (2,2) (n22) {};
			
			\node[circle,fill=black,draw] at (3,-1) (n3m1) {};
			\node[circle,fill=black,draw] at (5,1) (n51) {};
			\node[circle,fill=black,draw] at (4,0) (n40) {};
			
			\node[circle,fill=black,draw] at (-1,-1) (nm1m1) {};
			\node[circle,fill=black,draw] at (-3,1) (nm31) {};
			\node[circle,fill=black,draw] at (-2,0) (nm20) {};

			
			\draw (3,-1) -- (4.5,-1.5);
			\draw (5,1) -- (6,0.66);
			
			\draw (-1,-1) -- (-2.5,-1.5);
			\draw (-3,1) -- (-4,0.66);

			\path[every node]
			(n00) edge  (n10)
			edge (n01)
			edge (n3m1)
			
			(n10) edge  (n20)
			edge (n12)
			
			(n01) edge  (n21)
			
			(n02) edge  (n12)
			edge (nm31)
			
			(n12) edge  (n22)
			
			(n01) edge (n02)
			
			(n20) edge (n21)
			edge (nm1m1)
			(n21) edge (n22)
			(n22) edge (n51)
			(n40) edge (n3m1)
			edge (n51)
			edge [bend left] (3,-2)
			
			(nm20) edge (nm1m1)
			edge (nm31)
			edge [bend right] (-1,-2)


			;

		\end{tikzpicture}
																
	\end{center}
	\caption{The case B2.}\label{fig:CaseB2}
\end{minipage}
\end{figure}

Now, in order to inductively color $G(T,C)$, we consider the leave $l_n$ of $T$ corresponding to $t_n$. Fix one of the two possible ways of coloring the two edges incident $l_n$ not in $T$ with the two available colors in $\{1,2,3\}$. Then, we extend the coloring to the whole graph $G(T,C)$ by considering the blocks corresponding to $z_1,..., z_n$ in this order, and coloring the uncolored edges of $G(T,C)$ according to cases A and B. The edges of 3-paths corresponding to $t_j$s are colored by two colors of the two edges of $T$ adjacent to it. This will result to an edge-coloring of the whole graph. 

Let us consider the leave of $T$ corresponding to $t_n$. If during the consideration of $z_n$ (by applying the cases A or B), we did not flip the colors of the subpath of length two corresponding to $t_n$ (the order of colors that we fixed initially), then we stop. Otherwise, if the colors of these two edges have been flipped, then we differ two subcases C1 and C2 depending whether the pendant edges of $T$ corresponding to $t_{n}$ and $t_1$ have the same color or not. When these colors are the same (subcase C1, these edges are of color $\alpha$), we recolor the block corresponding to $z_1$ as it is described on Figure \ref{fig:CaseC1}. When these colors are different (Subcase C2, these edges are of colors $\alpha$ and $\beta$, respectively), we recolor the block corresponding to $z_1$ as it is described on Figure \ref{fig:CaseC2}. It is matter of direct verification that the resulting 6-edge-coloring is a normal edge-coloring of $G(T,C)$. The proof is complete.

\begin{figure}[ht]
\centering
\begin{minipage}[b]{.5\textwidth}
  \begin{center}

\begin{tikzpicture}[scale=0.75]
			
			
			\node at (-1.5, 1.7) {$a$};
			\node at (-3.75, 1.1) {$b$};
			\node at (-2.3, 0.7) {$\gamma$};
			\node at (-1.3, -0.3) {$\beta$};
			\node at (-2.3, -0.3) {$\alpha$};
			
			\node at (3.5, 1.7) {$a$};
			\node at (5.5, 1.1) {$c$};
			\node at (4.3, 0.7) {$\beta$};
			\node at (3.3, -0.3) {$\gamma$};
			\node at (4.3, -0.3) {$\alpha$};
			
			\node at (-0.35, -0.95) {$a$};
			\node at (-2.7, -1.6) {$b$};
			
			\node at (2.15, -0.95) {$a$};
			\node at (4.6, -1.6) {$c$};
			
			
			\node at (0.5, 0.3) {$c$};
			\node at (1.5, 0.3) {$b$};
			
			
		    \node at (0.5, 2.3) {$c$};
			\node at (1.5, 2.3) {$b$};
			
			
		    \node at (-0.3, 1.5) {$\beta$};
			\node at (-0.3, 0.5) {$\gamma$};
			
			
		    \node at (2.3, 1.5) {$\gamma$};
			\node at (2.3, 0.5) {$\beta$};
			
			
		    \node at (1.3, 1.5) {$\alpha$};
			\node at (0.5, 1.2) {$\alpha$};
			
			\tikzstyle{every node}=[circle, draw, fill=black!50,
                        inner sep=0pt, minimum width=4pt]
																								
			\node[circle,fill=black,draw] at (0,0) (n00) {};
			\node[circle,fill=black,draw] at (1,0) (n10) {};
			\node[circle,fill=black,draw] at (2,0) (n20) {};

			\node[circle,fill=black,draw] at (0,1) (n01) {};
			\node[circle,fill=black,draw] at (2,1) (n21) {};
			
			\node[circle,fill=black,draw] at (0,2) (n02) {};
			\node[circle,fill=black,draw] at (1,2) (n12) {};
			\node[circle,fill=black,draw] at (2,2) (n22) {};
			
			\node[circle,fill=black,draw] at (3,-1) (n3m1) {};
			\node[circle,fill=black,draw] at (5,1) (n51) {};
			\node[circle,fill=black,draw] at (4,0) (n40) {};
			
			\node[circle,fill=black,draw] at (-1,-1) (nm1m1) {};
			\node[circle,fill=black,draw] at (-3,1) (nm31) {};
			\node[circle,fill=black,draw] at (-2,0) (nm20) {};

			
			\draw (3,-1) -- (4.5,-1.5);
			\draw (5,1) -- (6,0.66);
			
			\draw (-1,-1) -- (-2.5,-1.5);
			\draw (-3,1) -- (-4,0.66);

			\path[every node]
			(n00) edge  (n10)
			edge (n01)
			edge (n3m1)
			
			(n10) edge  (n20)
			edge (n12)
			
			(n01) edge  (n21)
			
			(n02) edge  (n12)
			edge (nm31)
			
			(n12) edge  (n22)
			
			(n01) edge (n02)
			
			(n20) edge (n21)
			edge (nm1m1)
			(n21) edge (n22)
			(n22) edge (n51)
			(n40) edge (n3m1)
			edge (n51)
			edge [bend left] (3,-2)
			
			(nm20) edge (nm1m1)
			edge (nm31)
			edge [bend right] (-1,-2)


			;

		\end{tikzpicture}
																
	\end{center}
	
	\caption{The case C1.}\label{fig:CaseC1}
\end{minipage}%
\begin{minipage}[b]{.5\textwidth}
  	\begin{center}

\begin{tikzpicture}[scale=0.75]
			
			
			\node at (-1.5, 1.7) {$a$};
			\node at (-3.75, 1.1) {$b$};
			\node at (-2.3, 0.7) {$\gamma$};
			\node at (-1.3, -0.3) {$\beta$};
			\node at (-2.3, -0.3) {$\alpha$};
			
			\node at (3.5, 1.7) {$a$};
			\node at (5.5, 1.1) {$c$};
			\node at (4.3, 0.7) {$\alpha$};
			\node at (3.3, -0.3) {$\gamma$};
			\node at (4.3, -0.3) {$\beta$};
			
			\node at (-0.35, -0.95) {$a$};
			\node at (-2.7, -1.6) {$b$};
			
			\node at (2.15, -0.95) {$a$};
			\node at (4.6, -1.6) {$c$};
			
			
			\node at (0.5, 0.3) {$c$};
			\node at (1.5, 0.3) {$b$};
			
			
		    \node at (0.5, 2.3) {$c$};
			\node at (1.5, 2.3) {$b$};
			
			
		    \node at (-0.3, 1.5) {$\beta$};
			\node at (-0.3, 0.5) {$\gamma$};
			
			
		    \node at (2.3, 1.5) {$\gamma$};
			\node at (2.3, 0.5) {$\beta$};
			
			
		    \node at (1.3, 1.5) {$\alpha$};
			\node at (0.5, 1.2) {$\alpha$};
			
			\tikzstyle{every node}=[circle, draw, fill=black!50,
                        inner sep=0pt, minimum width=4pt]
																								
			\node[circle,fill=black,draw] at (0,0) (n00) {};
			\node[circle,fill=black,draw] at (1,0) (n10) {};
			\node[circle,fill=black,draw] at (2,0) (n20) {};

			\node[circle,fill=black,draw] at (0,1) (n01) {};
			\node[circle,fill=black,draw] at (2,1) (n21) {};
			
			\node[circle,fill=black,draw] at (0,2) (n02) {};
			\node[circle,fill=black,draw] at (1,2) (n12) {};
			\node[circle,fill=black,draw] at (2,2) (n22) {};
			
			\node[circle,fill=black,draw] at (3,-1) (n3m1) {};
			\node[circle,fill=black,draw] at (5,1) (n51) {};
			\node[circle,fill=black,draw] at (4,0) (n40) {};
			
			\node[circle,fill=black,draw] at (-1,-1) (nm1m1) {};
			\node[circle,fill=black,draw] at (-3,1) (nm31) {};
			\node[circle,fill=black,draw] at (-2,0) (nm20) {};

			
			\draw (3,-1) -- (4.5,-1.5);
			\draw (5,1) -- (6,0.66);
			
			\draw (-1,-1) -- (-2.5,-1.5);
			\draw (-3,1) -- (-4,0.66);

			\path[every node]
			(n00) edge  (n10)
			edge (n01)
			edge (n3m1)
			
			(n10) edge  (n20)
			edge (n12)
			
			(n01) edge  (n21)
			
			(n02) edge  (n12)
			edge (nm31)
			
			(n12) edge  (n22)
			
			(n01) edge (n02)
			
			(n20) edge (n21)
			edge (nm1m1)
			(n21) edge (n22)
			(n22) edge (n51)
			(n40) edge (n3m1)
			edge (n51)
			edge [bend left] (3,-2)
			
			(nm20) edge (nm1m1)
			edge (nm31)
			edge [bend right] (-1,-2)


			;

		\end{tikzpicture}
																
	\end{center}
	\caption{The case C2.}\label{fig:CaseC2}
\end{minipage}
\end{figure}

\end{proof}

\subsection{The number of normal edges in 6-edge-colorings}
\label{subsec:NormalEdges}

If a cubic graph $G$ is edge-colored with $k$ colors, then some edges of $G$ are poor (in the coloring), others are rich (in the coloring) and the rest of edges are neither rich nor poor. Let us say that an edge is normal with respect to an edge-coloring, if it is poor or rich in this coloring. If $f$ is an edge-coloring of $G$, then let $E(f)$ be the set of normal edges of $G$ in $f$. Conjecture \ref{conj:5NormalConj} predicts that any bridgeless cubic graph $G$ admits a 5-edge-coloring $f$, such that $E(f)=E(G)$. In \cite{SamalElecNotes} a result towards this conjecture is obtained which states that any bridgeless cubic graph admits a 5-edge-coloring $f$, such that $|E(f)|\geq \frac{|E(G)|}{3}$. Furthermore, a similar result is recently proved in \cite{PSS} also for 4-edge-colorings. In the light of Conjecture \ref{conj:6normalBridgelessCubic}, one may try to obtain a lower bound for $|E(f)|$, when $f$ is a 6-edge-coloring. Our next theorem addresses this issue.

\begin{theorem}
\label{thm:59Bound} Let $G$ be a bridgeless cubic graph. Then $G$ admits a 6-edge-coloring $f$, such that $|E(f)|\geq \frac{7}{9}\cdot  |E(G)|$.
\end{theorem}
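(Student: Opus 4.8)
The plan is to exhibit an explicit $6$-edge-coloring built from a $2$-factor and then to estimate the number of normal edges directly. First I would fix a perfect matching $M$ of $G$ together with its complementary $2$-factor $\mathcal C$ (any perfect matching works, but the one from Theorem \ref{thm:KaiserKral} meeting every $3$-edge-cut exactly once is convenient for controlling short cycles). I colour the edges of each cycle of $\mathcal C$ using the three colours $\{1,2,3\}$ — two colours in the alternating way on even cycles, three colours with a single ``break'' on odd cycles — and I colour the edges of $M$ using the three colours $\{4,5,6\}$. Since the two palettes are disjoint and each cycle is coloured properly, the result is automatically a proper $6$-edge-coloring $f$, so it remains only to bound $|E(f)|$ from below.

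The key preliminary step is to translate normality into two purely local conditions. For a \emph{cycle edge} $e=uv$ each endpoint carries exactly one matching edge, so $S_c(u)=\{a,a'_u,m_u\}$ and $S_c(v)=\{a,a'_v,m_v\}$ with $a,a'_u,a'_v\in\{1,2,3\}$ and $m_u,m_v\in\{4,5,6\}$; computing $|S_c(u)\cup S_c(v)|$ shows that $e$ is normal if and only if $[a'_u=a'_v]\Leftrightarrow[m_u=m_v]$, i.e. the ``other'' cycle-colour agrees at the two ends exactly when the two incident matching colours agree. For a \emph{matching edge} $e=uv$ both $S_c(u)$ and $S_c(v)$ consist of the colour of $e$ together with a $2$-subset of $\{1,2,3\}$; since two distinct $2$-subsets of a $3$-set always intersect, such an edge can never be rich, and it is normal precisely when $u$ and $v$ miss the same cycle-colour.

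With these criteria the strategy is to spend the freedom in the two palettes to satisfy as many local conditions as possible. On an even cycle the alternating colouring makes $a'_u$ constant, so every edge of the cycle becomes normal as soon as the incident matching edges are \emph{monochromatic} around it; on a triangle (and, suitably interpreted, on any short odd cycle) the three internal edges instead become normal exactly when the three incident matching edges receive pairwise \emph{distinct} colours. Simultaneously, the miss-pattern controlling the matching edges is essentially constant along each alternating cycle, with deviations only near the breaks of the odd cycles. Thus I would choose the matching colouring to be locally constant around even cycles and locally rainbow around triangles, together with a choice of break position on each odd cycle, making all even-cycle edges normal and all but a bounded number of the edges attached to each odd cycle normal.

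Finally I would count. The non-normal edges can be charged to the odd cycles of $\mathcal C$; since the cycles of a $2$-factor are vertex-disjoint and every odd cycle has at least three vertices, there are at most $n/3$ such cycles, where $n=|V(G)|$, and charging a bounded number (in the sharp analysis, giving $\tfrac29|E|$, essentially one) of non-normal edges to each yields at most $n/3=\tfrac29|E(G)|$ non-normal edges, hence $|E(f)|\ge\tfrac79|E(G)|$. The hard part will be exactly this charging: the two demands ``monochromatic around even cycles'' and ``rainbow around triangles'' are placed on the \emph{single} matching colouring and interact along each matching edge, which is shared by two cycles, so the obstacle is to produce one global assignment of $\{4,5,6\}$ (and of the odd breaks) for which the two kinds of failure — non-normal cycle edges where the incident matching colour changes, and non-normal matching edges where the miss-patterns disagree — are simultaneously kept within the stated budget. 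I would isolate this as the crux and settle it by a case-based local analysis around each cycle, using the third matching colour and the placement of each break as the available degrees of freedom.
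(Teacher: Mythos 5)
Your setup is sound as far as it goes: the palette split (colours $\{1,2,3\}$ on the $2$-factor, $\{4,5,6\}$ on the matching) automatically yields a proper $6$-edge-colouring, and your two local normality criteria are correct --- a cycle edge is normal iff the ``other'' cycle-colours at its ends agree exactly when the incident matching colours agree, and a matching edge is never rich and is normal iff its two ends miss the same cycle-colour. But the argument stops exactly where the difficulty begins. The whole content of the theorem is the existence of one \emph{global} assignment of $\{4,5,6\}$ to $M$, together with break positions on the odd cycles, for which the failures number at most $\frac{|V(G)|}{3}=\frac{2}{9}|E(G)|$; you name this the crux and defer it to an unspecified ``case-based local analysis''. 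This is not a routine verification: the demands you impose conflict along shared matching edges (an even cycle wants its pendant matching edges monochromatic while an adjacent triangle wants its pendant edges rainbow; two triangles joined by a matching edge impose rainbow constraints that propagate through the cycle-contraction graph like a proper colouring and need not be simultaneously satisfiable), and the budget is tight --- if the $2$-factor consists entirely of triangles you may afford only one non-normal edge per triangle on average, whereas a single triangle whose three pendant matching edges are not rainbow already produces up to three non-normal internal edges, in addition to possibly non-normal pendant matching edges. Nothing in the proposal shows that the failures can be confined to this budget, so the bound $\frac{7}{9}|E(G)|$ is not established.

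For comparison, the paper sidesteps this global constraint-satisfaction problem entirely. It first reduces to the $3$-edge-connected case by induction over $2$-edge-cuts, then takes a perfect matching $M$ meeting every $3$-edge-cut in one edge (Theorem \ref{thm:KaiserKral}), so that $G/\overline{M}$ is $4$-edge-connected and carries a nowhere-zero $\mathbb{Z}_2^2$-flow; lifting this to a nowhere-zero $\mathbb{Z}_2^3$-flow of $G$ gives a $7$-edge-colouring in which \emph{every} edge is already normal. Merging two of the three matching colour classes then produces a $6$-edge-colouring whose only possibly non-normal edges are the $2$-factor edges joining the two merged classes, and an averaging argument over which classes to merge bounds these by $\frac{|V(G)|}{3}=\frac{2}{9}|E(G)|$. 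If you wish to pursue your combinatorial route you must actually supply the missing global colouring argument; as written, the key step is a conjecture, not a proof.
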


\begin{proof} Let $G$ be a counterexample to the theorem minimizing $|V(G)|$. Clearly, $G$ is connected. Let us show that it has no 2-edge-cuts. Assume that $C=\{e_1, e_2\}$ is a 2-edge-cut. Let $G_1$ and $G_2$ be the two smaller bridgeless cubic graphs arising from the two components of $G-C$ by adding one edge connecting the two degree-two vertices in the same component. We let $h_1$ and $h_2$ be the two added edges of these two graphs, respectively. Since the graphs $G_1$ and $G_2$ are smaller, we have that they admit 6-edge-colorings $f_1$ and $f_2$ such that $|E(f_j)|\geq \frac{7}{9}\cdot |E(G_j)|$, $j=1,2$. By renaming the colors in $G_2$, we can always assume that the colors of $h_1$ and $h_2$ are the same, moreover, the colors appearing in the ends of $e_1$ are also the same. Now, if we color $e_1$ and $e_2$ with the color of $h_1$, then we will have that $e_1$ is always, poor, moreover if at least one of $h_1$ and $h_2$ is normal, then $e_2$ will also be normal. This means that in the resulting coloring $f$ of $G$, we will have:
\[|E(f)|\geq |E(f_1)|+|E(f_2)|.\]
Since, $|E(G)|=|E(G_1)|+|E(G_2)|$, we have
\[\frac{|E(f)|}{|E(G)|}\geq \frac{|E(f_1)|+|E(f_2)|}{|E(G_1)|+|E(G_2)|}\geq \min \left \{ \frac{|E(f_1)|}{|E(G_1)|}, \frac{|E(f_2)|}{|E(G_2)|}\right \}\geq \frac{7}{9}.\]

Thus, $G$ must be 3-connected. Let $M$ be a perfect matching of $G$ that intersects each 3-edge-cut of $G$ in a single edge (Theorem \ref{thm:KaiserKral}). If $\overline{M}$ is the complementary 2-factor of $M$, then $G/\overline{M}$ is 4-edge-connected. Hence it admits a nowhere zero $\mathbb{Z}_2^2$-flow $\theta$ (Theorem \ref{thm:Jaeger4flow}). Let us extend $\theta$ to a nowhere zero $\mathbb{Z}_2^3$-flow $\mu$ of the whole graph $G$ as it is done in the proof of Lemma 5.2 in \cite{HolySkoJCTB2004}: first for any edge $h\in M$, we define the triple $\mu(h)$ as follows: $\mu(h)=(0,\theta(h))$. Now, let $C$ be any cycle of $\overline{M}$. Let $x_0$ be any element of $\mathbb{Z}_2^3$, whose first coordinate is $1$. Assign $x_0$ to an edge of $C$. Then observe that the rest of the values of edges of $C$ are defined uniquely in $\mu$. Moreover, the first coordinates of the values of $\mu$ on $C$ are $1$. Hence for any edges $h_1\in M$ and $h_2\in \overline{M}$, we have $\mu(h_1)\neq \mu(h_2)$. Also observe that for different cycles of $\overline{M}$ we can choose $x_0$ differently. 

Clearly, for some nonzero element $\beta$ of $\mathbb{Z}_2^2$, we have 
\[|\theta^{-1}(\beta)|\leq \frac{1}{3}\cdot |E(G/\overline{M})|=\frac{1}{3}\cdot \frac{|V(G)|}{2}=\frac{|V(G)|}{6}.\]

Denote by $\alpha,\beta,\gamma$ the three nonzero elements of $\mathbb{Z}_2^2$. Denote by $n_\alpha$ (or $n_\gamma$) the number of edges of $\overline{M}$ having an end incident with an edge of $M$ with value $\beta$ and the other end incident with an edge of $M$ with value $\alpha$ (or $\gamma$). The relation $n_\alpha+n_\gamma \leq 4 \cdot \frac{|V(G)|}{6}$ holds, since the second term is larger than the total number of edges which have an end incident with an edge with value $\beta$. Hence, at least one between $n_\alpha$ and $n_\gamma$ is less or equal to $2 \cdot \frac{|V(G)|}{6}$, say $n_\alpha$. 
Consider a mapping $f$ obtained from $\mu$ as follows: if $e\notin \theta^{-1}(\beta)$, then $f(e)=\mu(e)$, otherwise, $f(e)=(0,\alpha)$. Let us show that $f$ is a proper 6-edge-coloring of $G$. First, note that the values of $\mu$ are the seven nonzero elements of $\mathbb{Z}_2^3$. Since, by definition, $f$ does take the value $(0,\beta)$, we have that $f$ takes at most six values. Moreover, since $\mu$ is a nowhere zero $\mathbb{Z}_2^3$-flow, it is a proper coloring. Now, if we look at the edges of $f^{-1}((0, \alpha))$, it is a subset of $M$, hence it is a matching. Thus, taking into account that $\mu$ is a proper coloring and $f^{-1}((0, \alpha))$ is a matching, we have that $f$ is a proper 6-edge-coloring.

In order to complete the proof, let us show that there are at most $n_\alpha$ edges that are neither rich nor poor in $f$. Since $\mu$ is a nowhere zero $\mathbb{Z}_2^3$-flow, it is a normal 7-edge-coloring, hence all edges of $G$ are either poor or rich in $\mu$. In order to construct $f$, we changed the values of $\mu$ on edges of $\mu^{-1}((0, \beta))$. Thus, all edges of $M$ will remain poor or rich in $f$. Moreover, the edges of $\overline{M}$ that are incident to two edges of $M$ with the same value of $\mu$, will remain poor or rich in $f$. The only possibility, when an edge that is neither poor nor rich in $f$ may arise is that when it is adjacent to an edge of $M$ with $\mu$-value $\alpha$ and an edge of $M$ with $\mu$-value $\beta$. But the number of such edges is $n_{\alpha}$. Thus, we may have at most $n_\alpha$ edges that are neither rich nor poor in $f$.
By the choice of $n_{\alpha}$, we have:
\[ |\overline{E(f)}| \leq n_\alpha \leq 2\cdot \frac{|V(G)|}{6}=\frac{|V(G)|}{3}=\frac{2|E(G)|}{9}.\]
Thus, for the resulting 6-edge-coloring $f$, we will have
\[|E(f)|\geq \frac{7|E(G)|}{9}.\]
The proof is complete.
\end{proof}


\section{Future work}
\label{sec:FutureWork}

The main result of \cite{Normal7flows} states that any simple cubic graph admits a normal 7-edge-coloring. There it is also shown that any bridgeless cubic graph $G$ admits a normal 7-edge-coloring (see also \cite{Bilkova12}), and this result is obtained simply by considering a nowhere zero $\mathbb{Z}_2^3$-flow of $G$. One may wonder whether we can choose the nowhere zero $\mathbb{Z}_2^3$-flow $\theta$ of $G$, such that for one nonzero element $\gamma\in \mathbb{Z}_2^3$, we have $\theta^{-1}(\gamma)=\emptyset$. Observe that if such a flow existed in $G$, it would have been a normal 6-edge-coloring of $G$. The next theorem shows that not all bridgeless cubic graphs can have such a nowhere zero $\mathbb{Z}_2^3$-flow.

\begin{theorem}
\label{thm:MissingValueNZ8flow} Let $G$ be a bridgeless cubic graph, and assume that $G$ admits a nowhere zero $\mathbb{Z}_2^3$-flow $f$, such that there is a nonzero $\gamma\in \mathbb{Z}_2^3$, such that $f^{-1}(\gamma)=\emptyset$. Then $G$ is $3$-edge-colorable.
\end{theorem}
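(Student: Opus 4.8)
The plan is to show that the hypothesis lets us \emph{collapse} the given $\mathbb{Z}_2^3$-flow to a nowhere-zero $\mathbb{Z}_2^2$-flow, from which $3$-edge-colorability follows by a classical argument. The starting observation is that since $G$ is cubic and we work over a group of exponent $2$, orientation is irrelevant: the flow condition at a vertex $v$ whose incident edges carry values $a,b,c$ reads simply $a+b+c=0$. As the three values are nonzero, they are pairwise distinct and $\{0,a,b,c\}$ is the $2$-dimensional subspace spanned by any two of them; equivalently, the three colours at each vertex form a line of the Fano plane on the seven nonzero elements of $\mathbb{Z}_2^3$.

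First I would form the quotient homomorphism $\pi\colon \mathbb{Z}_2^3 \to \mathbb{Z}_2^3/\langle\gamma\rangle$, identifying the target with $\mathbb{Z}_2^2$. Its kernel is exactly $\{0,\gamma\}$. Define $g=\pi\circ f$ on the edges of $G$. Because $\pi$ is a homomorphism, the vertex relation $a+b+c=0$ is preserved, so $g$ satisfies flow conservation at every vertex and is therefore a $\mathbb{Z}_2^2$-flow.

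The one point that actually requires the hypothesis, and the step I would flag as the crux, is that $g$ is \emph{nowhere-zero}. We have $g(e)=0$ precisely when $f(e)\in\ker\pi=\{0,\gamma\}$. Since $f$ is nowhere-zero we already know $f(e)\neq 0$, and the assumption $f^{-1}(\gamma)=\emptyset$ guarantees $f(e)\neq\gamma$ as well. Hence $f(e)\notin\{0,\gamma\}$ for every edge, so $g(e)\neq 0$ throughout; this is the only place the ``missing value'' is used, and without it the projection could annihilate a colour.

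Finally I would invoke the classical equivalence, for cubic graphs, between nowhere-zero $\mathbb{Z}_2^2$-flows (i.e.\ nowhere-zero $4$-flows) and proper $3$-edge-colorings, which here is immediate: at each vertex the three values of $g$ are nonzero and sum to zero, so they must be the three distinct nonzero elements of $\mathbb{Z}_2^2$, and this is precisely a proper $3$-edge-coloring. Thus $G$ is $3$-edge-colorable, as claimed.
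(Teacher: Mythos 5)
Your proof is correct and is essentially the paper's argument in quotient-group language: the paper's partition of the six remaining nonzero elements into three pairs each summing to $\gamma$ is exactly the set of nontrivial cosets of $\langle\gamma\rangle$, i.e.\ the fibres of your projection $\pi$, and the paper's check that each $f^{-1}(\Gamma_j)$ is a matching is the same computation as your verification that the induced $\mathbb{Z}_2^2$-flow is nowhere-zero and yields a proper $3$-edge-coloring. No substantive difference.
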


\begin{proof} Let $G$ be a bridgeless cubic graph, and let $f$ be a nowhere zero $\mathbb{Z}_2^3$-flow such that $f^{-1}(\gamma)=\emptyset$. By choosing a suitable automorphism of $\mathbb{Z}_2^3$, we can always assume that $\gamma=111$. Observe that the other six nonzero elements of $\mathbb{Z}_2^3$ can be partitioned into three subsets $\Gamma_1$, $\Gamma_2$, $\Gamma_3$ of cardinality two, such that the sum of elements in each $\Gamma_i$ in the group $Z_{2}^{3}$ is equal to $\gamma$.

Let us show that $f^{-1}(\Gamma_j)$ is a matching in $G$. Assume it contains two adjacent edges $x$ and $y$. Then the value of the flow $f$ on the third edge must be $f(x)+f(y)=\gamma$, which contradicts the fact that $f^{-1}(\gamma)=\emptyset$. 

Thus, $f^{-1}(\Gamma_j)$ is a matching for $j=1,2,3$, and clearly these three matchings form a partition of $E(G)$. Hence, $G$ is $3$-edge-colorable. The proof is complete.
\end{proof}

Theorem \ref{thm:MissingValueNZ8flow} implies that, for non-3-edge-colorable cubic graphs, there is no hope to prove Conjecture \ref{conj:6normalBridgelessCubic} with the approach outlined above.

The next approach for proving Conjecture \ref{conj:6normalBridgelessCubic} prompt the proofs of Theorems \ref{thm:PermGraphs} and \ref{thm:59Bound}. It is easy to see that the smallest counterexample to Conjecture \ref{conj:6normalBridgelessCubic} must be a 3-edge-connected graph. Hence, it will follow from the following:

\begin{conjecture}
\label{conj:NonConflictingFlow6} Let $G$ be a 3-edge-connected cubic graph different from the Petersen graph (Figure \ref{fig:Petersen10}). Then $G$ admits a nowhere zero $\mathbb{Z}_2^3$-flow $f$, such that there are two elements $\alpha, \beta \in \mathbb{Z}_2^3$ with
\begin{enumerate}
    \item [(1)] $f^{-1}(\{\alpha, \beta \})$ is a matching in $G$,
    
    \item [(2)] there is no edge $e=uv$ of $G$, such that $u$ is incident to an edge $e_u$ and $v$ is incident to an edge $e_v$ with $f(e_u)=\alpha$ and $f(e_v)=\beta$.
\end{enumerate}
\end{conjecture}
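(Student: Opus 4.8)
The plan is to build the required flow with the machinery already developed in the proof of Theorem~\ref{thm:59Bound}, and then to concentrate all of the effort on forcing the single quantity $n_{\alpha}$ that appears there down to exactly $0$. Since $G$ is $3$-edge-connected by hypothesis, I would choose a perfect matching $M$ meeting every $3$-edge-cut in a single edge (Theorem~\ref{thm:KaiserKral}), contract the complementary $2$-factor $\overline{M}$ so that $G/\overline{M}$ becomes $4$-edge-connected, apply Jaeger's theorem (Theorem~\ref{thm:Jaeger4flow}) to obtain a nowhere-zero $\mathbb{Z}_2^2$-flow $\theta$ on $G/\overline{M}$, and lift $\theta$ to a nowhere-zero $\mathbb{Z}_2^3$-flow $f$ exactly as in that proof, so that every edge of $M$ receives a value whose first coordinate is $0$ and every edge of $\overline{M}$ a value whose first coordinate is $1$.

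With this construction both conditions of the conjecture become transparent. I would take $\alpha$ and $\beta$ to be two of the three values with first coordinate $0$; since such values occur only on the matching $M$, the set $f^{-1}(\{\alpha,\beta\})$ is automatically a matching and condition~(1) holds for free. Condition~(2) then says precisely that no edge of $\overline{M}$ joins an endpoint whose $M$-edge is coloured $\alpha$ to an endpoint whose $M$-edge is coloured $\beta$; labelling each vertex by the $\theta$-value of its unique $M$-edge, this is the requirement that along every cycle of $\overline{M}$ no $\alpha$-labelled vertex is consecutive with a $\beta$-labelled vertex. In the notation of Theorem~\ref{thm:59Bound} it is exactly the assertion $n_{\alpha}=0$ for a suitable naming of the three nonzero elements. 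Once this is achieved, recolouring every $\beta$-edge with colour $\alpha$ yields a proper $6$-edge-colouring which is still normal: the merge can damage an edge $e=uv$ only if both $\alpha$ and $\beta$ occur among the five edges of its palette, and conditions~(1) and~(2) together rule this out, so every edge keeps the poor/rich status it had in the normal $7$-edge-colouring $f$.

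The hard part is exactly this last arrangement. Theorem~\ref{thm:59Bound} only delivers the bound $n_{\alpha}\le 2\cdot|V(G)|/6$, whereas the conjecture demands $n_{\alpha}=0$ for at least one of the three pairs of values. I would attack this by an exchange argument, using the freedom in the choice of $\theta$ (a nowhere-zero $\mathbb{Z}_2^2$-flow can be altered by adding flows supported on carefully chosen even subgraphs of $G/\overline{M}$, which reshuffles the partition of $M$ into colour classes) together with the freedom to re-choose $M$ itself, so as to destroy the $\alpha$--$\beta$ adjacencies cycle by cycle along $\overline{M}$. The obstruction to completing such a local-correction scheme is precisely where the Petersen graph must enter: since $P_{10}$ is a genuine exception, any correct proof has to isolate it as the unique configuration in which every choice of matching, flow and pair of merged colours still leaves a conflicting edge, and making that extremal dichotomy precise is what keeps the statement at the level of a conjecture. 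Should the flow-exchange approach stall, a complementary line worth pursuing is a structural induction over cyclic $4$-edge-cuts and nontrivial $3$-edge-cuts, reducing $G$ to smaller cyclically $4$-edge-connected pieces and recombining their flows while controlling the colours that cross each cut --- the same cut-and-paste philosophy that succeeded for permutation and treelike snarks.
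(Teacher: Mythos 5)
The statement you are addressing is posed in the paper as an open conjecture (Conjecture~\ref{conj:NonConflictingFlow6}); the paper contains no proof of it, only the surrounding discussion of why it would imply Conjecture~\ref{conj:6normalBridgelessCubic}. Your proposal does not prove it either, and to your credit you say so explicitly. The part of your write-up that is sound is the reduction: taking $M$ from Theorem~\ref{thm:KaiserKral}, lifting a nowhere-zero $\mathbb{Z}_2^2$-flow on $G/\overline{M}$ to a nowhere-zero $\mathbb{Z}_2^3$-flow $f$ as in the proof of Theorem~\ref{thm:59Bound}, and choosing $\alpha,\beta$ among the three values supported on $M$ does make condition~(1) automatic, does make condition~(2) equivalent to $n_\alpha=0$ for one of the three ordered pairs of $M$-values, and the final merge of $\beta$ into $\alpha$ would indeed produce a normal $6$-edge-coloring. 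This is exactly the route the paper itself has in mind when it says the conjecture would settle Conjecture~\ref{conj:6normalBridgelessCubic}, so you have correctly reconstructed the motivation.

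The genuine gap is the entire content of the conjecture: you never establish that $n_\alpha=0$ is achievable. Theorem~\ref{thm:59Bound} only yields $n_\alpha\le |V(G)|/3$, and the distance from that bound to zero is not a matter of tightening constants --- the proposed ``exchange argument'' (perturbing $\theta$ by flows on even subgraphs, or re-choosing $M$) is named but not carried out, and no invariant is exhibited that would show the local corrections terminate or do not reintroduce conflicts elsewhere on the cycles of $\overline{M}$. Nor is it shown why the Petersen graph, and only the Petersen graph, obstructs every such choice; without that, the exceptional clause in the statement cannot be justified. The alternative suggestion of inducting over small edge-cuts is likewise only a heuristic. In short, your proposal is a correct restatement of the problem in the flow language of Section~\ref{subsec:NormalEdges}, but it contains no proof, and none exists in the paper: the statement remains open.
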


In other words the second condition in Conjecture \ref{conj:NonConflictingFlow6} says that the subgraph induced by the edges in $f^{-1}(\{\alpha, \beta \})$ is exactly the union of the two subgraphs induced by the edges in $f^{-1}(\{\alpha \})$ and the edges in $f^{-1}(\{ \beta \})$.
In order to derive Conjecture \ref{conj:6normalBridgelessCubic} as a consequence of Conjecture \ref{conj:NonConflictingFlow6}, observe that the smallest counterexample to Conjecture \ref{conj:6normalBridgelessCubic} is 3-edge-connected, and, clearly, it is different from the Petersen graph. Now, if we have the nowhere zero $\mathbb{Z}_2^3$-flow $f$ from Conjecture \ref{conj:NonConflictingFlow6}, then we can view $f$ as a normal 7-edge-coloring. If we consider an edge-coloring $c$ of $G$ obtained from $f$ as follows: $c$ coincides with $f$ everywhere, except that the edges $e$ with $f(e)=\beta$ have color $c(e)=\alpha$. It is easy to see that $c$ is a normal 6-edge-coloring of $G$.

\section*{Acknowledgement} We would like to thank Robert \v{S}\'{a}mal and Jean Paul Zerafa for useful discussions over the normal colorings.



\bibliographystyle{elsarticle-num}


\end{document}